\theoremstyle{plain} 
\newtheorem{theorem}{Theorem} 
\newtheorem{proposition}{Proposition} 
\newtheorem{corollary}{Corollary} 
\theoremstyle{definition} 
\newtheorem{definition}{Definition} 
\newtheorem{assumption}{Assumption} 
\theoremstyle{remark} \newtheorem{remark}{Remark} 
\newenvironment{proof}[1][Proof]{
 \par\noindent\textit{#1.}\ }{ \hfill$\square$\par}
\def\BibTeX{{\rm B\kern-.05em{\sc i\kern-.025em b}\kern-.08em
    T\kern-.1667em\lower.7ex\hbox{E}\kern-.125emX}}
\newcommand{\thmsep}{\par\addvspace{4pt plus 2pt minus 2pt}}
\renewcommand\subsection{%
  \@startsection{subsection}{2}{0pt}%
    {6pt plus 2pt minus 2pt}% before
    {3pt plus 1pt minus 1pt}% after
    {\normalfont\normalsize\itshape}% your style
}
\renewcommand\subsubsection{%
  \@startsection{subsubsection}{3}{0pt}%
    {4pt plus 2pt minus 2pt}% before
    {2pt plus 1pt minus 1pt}% after
    {\normalfont\normalsize\itshape}% your style
}
\begin{document}
\begin{frontmatter}

\title{Safe Navigation with Zonotopic Tubes: An Elastic Tube-based MPC Framework} % Title, preferably not more 
                                                % than 10 words.

\author[MSU]{Niyousha Ghiasi}\ead{ghiasini@msu.edu},    % Add the 
\author[MSU]{Bahare Kiumarsi}\ead{kiumarsi@msu.edu},     % e-mail address 
\author[MSU1]{Hamidreza Modares}\ead{modaresh@msu.edu}    % (ead) as shown

\address[MSU]{Department of Electrical and Computer Engineering, Michigan State University, East Lansing, MI, USA}  % Please supply full address here.
\address[MSU1]{Department of Mechanical Engineering, Michigan State University, East Lansing, MI, USA}
\begin{keyword}     
Elastic tube-based MPC; constrained matrix zonotope; contractive tube; prior knowledge; recursive feasibility. 
\end{keyword}  

\begin{abstract}
This paper presents an elastic tube-based model predictive control (MPC) framework for unknown discrete-time linear systems subject to disturbances. Unlike most existing elastic tube-based MPC methods, we do not assume perfect knowledge of the system model or disturbance realizations bounds. Instead, a conservative zonotopic disturbance set is initialized and iteratively refined using data and prior knowledge: data are used to identify matrix zonotope model sets for the system dynamics, while prior physical knowledge is employed to discard models and disturbances inconsistent with known constraints. This process yields constrained matrix zonotopes representing disturbance realizations and dynamics that enable a principled fusion of offline information with limited online data, improving MPC feasibility and performance.
The proposed design leverages closed-loop system characterization to learn and refine control gains that maintain a small tube size. By separating open-loop model mismatch from closed-loop effects in the error dynamics, the method avoids dependence on the size of the state and input operating regions, thereby reducing conservatism. An adaptive co-design of the tube and ancillary feedback ensures $\lambda$-contractive zonotopic tubes, guaranteeing robust positive invariance, improved feasibility margins, and enhanced disturbance tolerance.
We establish recursive feasibility conditions and introduce a polyhedral Lyapunov candidate for the error tube, proving exponential stability of the closed-loop error dynamics under the adaptive tube-gain updates. Simulations demonstrate improved robustness, enlarged feasibility regions, and safe closed-loop performance using only a small amount of online data.
%We establish verifiable recursive feasibility conditions, and simulations demonstrate improved robustness, expanded feasibility, and safe closed-loop performance using only a small amount of online data.
\end{abstract}
\end{frontmatter}

%\begin{abstract}
%This paper presents an elastic tube model predictive control (MPC) framework for unknown discrete-time linear systems with bounded, zonotopic disturbances. The design separates open-loop model mismatch from closed-loop effects when forming the error dynamics, avoiding lumped uncertainty that inflates tightenings. From data, we identify matrix-zonotope model sets for both the closed-loop and open-loop dynamics, then refine them into constrained matrix zonotopes by excluding models and disturbances that violate prior knowledge—enabling principled fusion of offline information with only a few on-policy samples. While MPC designs the nominal state–input trajectory, both the tube and the ancillary feedback are updated—replacing fixed worst-case gains with an adaptive co-design that reduces conservatism. The ancillary gain is designed to make the zonotopic tube $\lambda$-contractive, ensuring robust positive invariance and a monotonically shrinking error set—thereby expanding feasibility margins and improving disturbance tolerance. We establish verifiable conditions for recursive feasibility, and simulations show increased feasibility, stronger disturbance robustness, and safe performance with only a small amount of on-policy data.
%\end{abstract}
%\begin{IEEEkeywords}

%\end{IEEEkeywords}

\section{Introduction}

Safety-critical systems in aerospace, automotive, power, and robotics must satisfy hard state and input constraints while achieving high performance. For this purpose, model predictive control (MPC) has emerged as a leading methodology, optimizing system evolution over a receding horizon under explicit state and input constraints \cite{MPCbook1,bemporad1999robustMPCsurvey}. In practice, disturbances and model uncertainties prevent the plant from following the nominal model. Even small exogenous inputs or model errors can cause deviations between predicted and actual trajectories, jeopardizing constraint satisfaction. This motivates data-driven robust MPC, which leverages data to reduce model uncertainty while explicitly accounting for irreducible uncertainty to ensure feasibility and safety under realistic conditions \cite{MPCbook2,auto2}.

A prominent data-driven MPC approach is data-enabled predictive control (DeePC) 
\cite{coulson2019data,coulson2021robust,cummins2024deepc_hunt}, 
which computes finite-horizon optimal inputs directly from measured input–output trajectories. Although robust variants introduce regularization and slack variables to mitigate measurement noise and modeling errors, DeePC formulations typically lead to large-scale convex programs whose dimension grows with the size of the collected dataset and the prediction horizon. Moreover, since DeePC optimizes over trajectories rather than invariant sets, it does not provide explicit set-based guarantees such as robust positive invariance, nor does it yield a common Lyapunov certificate that ensures stability under time-varying operating conditions, changing data windows, or online updates. 

As a common robust MPC approach, min–max MPC optimizes the control policy against worst-case realizations, but this often results in conservative behavior and significant computational burden for online implementation \cite{minmax1,minmax2,minmax3}. Moreover, such methods typically assume known system models, which can further increase conservatism in uncertain environments. A more tractable class of robust methods is tube-based MPC (TMPC), which ensures robustness by confining the deviation between the actual and nominal trajectories to a robustly invariant tube, enabling constraint tightening and recursive feasibility under bounded disturbances \cite{rigid,AdjTube,AdjTube2,AdjTube3,crsec,knownf1,knownf2,un2,HTMPC1,un1}. In its classical form, TMPC decouples performance from robustness by combining a nominal MPC with a predesigned ancillary feedback and a fixed robust positively invariant (RPI) tube. Configuration-constrained tube MPC (CC-TMPC) \cite{CCTMPC,CCTMPC1} can be viewed as a special case of TMPC in which polytopic tubes and associated vertex control laws are optimized online within a fixed polyhedral template. While this added flexibility can reduce conservatism, CC-TMPC typically relies on worst-case, a priori uncertainty sets and fixed templates, which may remain conservative and lead to increased online computational burden as horizon length and template complexity grow, even with recent complexity reductions \cite{CCTMPC2}.
 % Despite the complexity reductions in \cite{CCTMPC2}, scalability and template/parameterization choices remain practical considerations.

Despite TMPC’s computational efficiency, rigid-tube schemes typically assume known dynamics, use a fixed ancillary feedback, and rely on a predesigned RPI tube with uniform worst-case tightenings, leading to substantial conservatism \cite{rigid}. To reduce this, scaled-zonotope tightenings \cite{AdjTube}, elastic zonotopic tubes with horizon-wise scaling \cite{AdjTube2,AdjTube3}, and time-varying tube cross-sections via support-function tightenings \cite{crsec} have been proposed. Zonotope-based set-membership estimation has also been combined with TMPC \cite{knownf1}, and supervisory schemes filter unsafe inputs while ensuring constraint satisfaction \cite{knownf2}. Nevertheless, even with online tube scaling, these methods still assume known dynamics and fixed ancillary feedback, and the absence of data-driven model refinement maintains conservatism under real-world uncertainty. Recent work has extended TMPC to unknown dynamics by combining data-driven models with robust tubes. In \cite{un2}, a data-consistent model set is used with a fixed RPI tube, leading to conservative tube sizes. Multi-step predictors in \cite{HTMPC1} update a homothetic tube but rely on a fixed ancillary feedback. An operator-based propagation of zonotopic tubes is proposed in \cite{un1}, yet without adaptive scaling or feedback, conservatism remains and recursive feasibility is not guaranteed. Moreover, these methods lack mechanisms to incorporate prior system knowledge into their data-driven frameworks.

Unlike existing data-based TMPC methods that rely on open-loop learning, the proposed framework integrates closed-loop learning with prior physical knowledge, yielding a control-oriented and physically informed design. Prior knowledge has been combined with learning in stabilization settings~\cite{nar,ellips1}, but these works use the S-procedure to over-approximate ellipsoidal set intersections, introducing conservativeness and not extending to tube-based MPC. Here, we instead model tubes through closed-loop zonotopic error dynamics and iteratively refine them using data and physical constraints to discard inconsistent disturbances. This avoids set-intersection over-approximation and enables tractable, tighter uncertainty descriptions. Moreover, unlike elastic or homothetic TMPC with fixed ancillary feedback, we jointly adapt both the tube and feedback gains online, ensuring a compact, $\lambda$-contractive tube with reduced conservatism.

The main contributions of this work are summarized as follows: 
% This paper develops an elastic TMPC framework for unknown discrete-time linear systems with bounded disturbances. Guided by \cite{unifying}, our key contributions are as follows:
\begin{itemize}
  \item We construct a matrix zonotope capturing closed-loop model sets consistent with open-loop learned dynamics, then refine it into a constrained matrix zonotope by excluding models inconsistent with prior knowledge, enabling a principled, less conservative fusion of offline and limited online data.
   % We construct a matrix zonotope (MZ) that captures closed-loop models consistent with data as well as open-loop learned system models, and then refine it into a constrained matrix zonotope (CMZ) by excluding models and disturbances inconsistent with prior knowledge. This provides a principled and less conservative fusion of offline information with limited online data.
  \item The error recursion separates open-loop model mismatch from closed-loop effects, enabling tighter tube propagation than lumping all uncertainty into a single set (as in \cite{un2}, which scales with both state and input sets and can be conservative).
  \item We alternate between solving TMPC for a nominal trajectory and updating both the tube and the ancillary feedback; the gain is not fixed a priori but updated at each iteration, reducing worst-case conservatism relative to fixed-gain designs.
  \item We design a $\lambda$-contractive zonotopic tube that maintains robust positive invariance while enforcing shrinkage of the error set over time, which increases feasibility margins and enhances disturbance tolerance.
  \item We derive verifiable conditions that guarantee recursive feasibility and propose a polyhedral Lyapunov candidate for the error tube that certifies exponential stability of the closed-loop error dynamics under the adaptive tube–gain updates.
\end{itemize}
\vspace{3pt}
\noindent \textbf{Notations and Definitions.} Real vectors and matrices are denoted by $\mathbb{R}^n$ and $\mathbb{R}^{n\times m}$, respectively.
$\mathbf{I}_p$ is the $p\times p$ identity matrix, and $\mathbf{0}_{n\times m}$ is the $n\times m$ zero matrix. $\mathbf  A\otimes \mathbf B$ denotes the Kronecker product of any matrices $\mathbf A$ and $\mathbf  B$. For $x\in\mathbb{R}^n$, $x^i$ denotes its $i$th component. For a matrix $X$, $X^\top$ is the transpose; $X^{i,:}$ and $X^{:,j}$ are its $i$th row and $j$th column, respectively; $X^{i,j}$ is the $(i,j)$ entry. $X^{\perp}$ denotes a matrix whose columns form the basis for the kernel of $X$; $X^\dagger$ is the Moore–Penrose pseudoinverse. We denote the column-wise vectorization of $X$ by $\text{vec}(X)$, obtained by stacking its columns into a single vector. For any vectors or matrices, $|\cdot|$ denotes the elementwise absolute value, and $\|\cdot\|$ denotes the infinity norm. Given $G=[\,G^{:,1}\ \cdots\ G^{:,v}\,]\in\mathbb{R}^{n\times m v}$ with blocks $G^{:,i}\in\mathbb{R}^{n\times m}$ and $Q \in\mathbb{R}^{m\times p}$, we define $
G\circ Q \;:=\; G\,(I_{v}\otimes Q)\in\mathbb{R}^{n\times pv}$. For any sets $\mathcal A,\mathcal B \subseteq \mathbb{R}^n$, $\mathcal A \oplus \mathcal B$ denotes the Minkowski sum and $\mathcal A \ominus \mathcal B$ denotes the (Pontryagin) Minkowski difference. We denote the set of vertices of a polytope $\mathcal{P}$ by $\mathrm{vert}(\mathcal{P})$.

\begin{definition}[(Constrained) zonotope {\cite{ID3}}]\label{defcz}
Let \(G\in\mathbb{R}^{n\times s}\) and \(c\in\mathbb{R}^n\).  
The constrained zonotope generated by \((G,c)\) with equality pair \((\bar A_c,\bar b_c)\),
\(\bar A_c\in\mathbb{R}^{n_c\times s}\), \(\bar b_c\in\mathbb{R}^{n_c}\), is
\begin{align*}
\mathcal Z &=  \langle G,c,\bar A_c,\bar b_c\rangle\\&
  = \Big\{x \in \mathbb{R}^n: x= c + G\bar\zeta \;\Big|\; \ \bar A_c\bar\zeta=\bar b_c,\, \big\Vert \bar\zeta  \big\Vert \leq 1 \Big\}.
\end{align*}

If no equalities are imposed (i.e., \(\bar A_c=\mathbf 0\), \(\bar b_c=\mathbf 0\)),
\(\mathcal Z\) reduces to the zonotope \(\langle G,c\rangle\).
\end{definition}

\begin{definition}[(Constrained) matrix zonotope {\cite{ID3}}]
Let \(G=[\,G^{:,1}\ \cdots\ G^{:,s}\,]\in\mathbb{R}^{n\times ms}\) with block columns
\(G^{:,i}\in\mathbb{R}^{n\times m}\), and let \(C\in\mathbb{R}^{n\times m}\).
With block-equality matrices \(\bar A_C=[\,\bar A_C^{:,1}\ \cdots\ \bar A_C^{:,s}\,]\in\mathbb{R}^{n_c\times m_cs}\) and
\(\bar B_C\in\mathbb{R}^{n_c\times m_c}\), the constrained matrix zonotope is
\begin{align*} 
\mathcal{M}\!=\!\big<G,C,\bar A_C,\bar B_C\big>\!=&\Big\{X \in \mathbb{R}^{n \times m}\!:\!X=C\!+\!\sum\limits_{i=1}^{s}G^{:,i}\bar\zeta^i \Big|\;\\& \sum\limits_{i=1}^{s} \bar A_C^{:,i}   \bar\zeta^i={\bar B_C}, \,\, \big\Vert \bar\zeta  \big\Vert \leq 1  \Big\}.
\end{align*}

In the absence of equalities (i.e., \(\bar A_C=\mathbf 0\), \(\bar B_C=\mathbf 0\)),
\(\mathcal M\) reduces to the matrix zonotope \(\langle G,C\rangle\).
\end{definition}

\section{Problem Formulation}
% This section presents the problem formulation and a set-based representation of closed-loop systems using data. \vspace{-7pt}
Consider the discrete-time linear system as 
\begin{equation}\label{system} 
x(t+1) = {A}^{\star}x(t) + B^{\star} u(t) + w(t),
\end{equation}
where $x(t) \in \mathcal{X} \subset \mathbb{R}^n$ is the system's state, $u(t) \in  \mathcal{U}\subset  \mathbb{R}^m$ is the control input, and $w(t) \in  \mathcal{Z}_w \subset \mathbb{R}^n$ is the additive disturbance. Moreover, $\mathcal{X}$, $\mathcal{U}$, and $\mathcal{Z}_w$ denote the admissible state, input, and disturbance sets, respectively.

\begin{assumption} \label{safeset}
The admissible state and input sets are given by the polytopes 
\begin{align*}
& \mathcal{X}=\mathcal{P} (H_x,h_x) = \\ & \big \{ x \in {\mathbb{R}^n}:H_x x  \le h_x, \,\, H_x \in \mathbb{R}^{q \times n}, \,\, h_x \in \mathbb{R}^q\big\},   
\end{align*}
and 
\begin{align*}
  &  \mathcal{U}=\mathcal{P} (H_u,h_u) = \\ & \big \{ u \in {\mathbb{R}^m}:H_u u  \le h_u , \,\, H_u \in \mathbb{R}^{v \times m}, \,\, h_u \in \mathbb{R}^v\big\},
\end{align*}
respectively.
\end{assumption}
\begin{assumption} \label{dist}
The disturbance set $\mathcal{Z}_w$ is a zonotope in $\mathbb{R}^{n}$ with $s_{w}$ generators. That is, $\mathcal{Z}_w=\big<G_h,c_h\big>$ for some $G_h \in \mathbb{R}^{n \times s_{w}}$, and $c_h \in  \mathbb{R}^{n}$.
\end{assumption}
\begin{assumption}\label{PK}
The actual parameter matrix \(\theta^\star=[A^\star\ \ B^\star]\in\mathbb{R}^{n\times(n+m)}\) is unknown but belongs to a given constrained matrix zonotope \(\mathcal{M}_{{prior}}=\langle G_\theta,\,C_\theta,\,A_\theta,\,B_\theta\rangle\),
with \(C_\theta\!\in\!\mathbb{R}^{n\times(n+m)}\),
\(G_\theta\!\in\!\mathbb{R}^{n\times (n+m)s_\theta}\),
\(A_\theta\!\in\!\mathbb{R}^{n_\theta\times m_\theta s_\theta}\),
and \(B_\theta\!\in\!\mathbb{R}^{n_\theta\times m_\theta}\).
\end{assumption}
\begin{assumption} \label{contr}
The pair \((A^\star,B^\star)\) is stabilizable within the prior set $\mathcal{M}_{prior}$.
\end{assumption} 

\begin{remark}
Assumption 1 is standard, as the admissible state and input sets are typically dictated by the the system’s physical limitations and environmental constraints. For example, actuator bounds, joint limits, workspace restrictions, and obstacle-avoidance requirements naturally define $\mathcal{X}$ and $\mathcal{U}$. 
Assumption 2 is also standard, as disturbances are typically bounded, while their exact realizations remain unknown. We therefore assume an initial conservative bound (e.g., a large zonotope) and subsequently refine it using available data and prior knowledge to accurately approximate the realized disturbances. Since the realized disturbances influence our proposed control framework—through the closed-loop representation—this refinement procedure substantially reduces conservatism. Assumption 3 is also standard, as coarse prior knowledge of the system bounds can typically be obtained from the physical principles governing the system dynamics. 
% Assumption~4 is a fundamental minimal requirement, as without it, achieving any meaningful control of the system would be impossible.
\end{remark} 

\begin{definition}[Robust invariant set (RIS) {\cite{SetB}}]
A set \(\mathcal P\subset\mathbb R^n\) is an RIS for \eqref{system} if, for every disturbance \(w(t)\in\mathcal Z_w\) and every initial condition \(x(0)\in\mathcal P\), it holds that \(x(t)\in\mathcal P\) for all \(t\ge 0\).
\end{definition} 

To learn an RIS, we leverage the notion of $\lambda$-contractive sets, introduced next, which guarantees that the resulting set is robustly invariant and that the state converges to the origin at a rate no slower than~$\lambda$.

\begin{definition}[$\lambda$–Contractive set {\cite{SetB}}]\label{contract}
For \(\lambda\in(0,1)\), a set \(\mathcal P\subset\mathbb R^n\) is \(\lambda\)–contractive for \eqref{system} if, for every \(t\ge 0\) and every \(w(t)\in\mathcal Z_w\), \(x(t)\in\mathcal P\) implies \(x(t{+}1)\in \lambda\,\mathcal P\).
This notion can be extended to a time–varying contraction factor $\lambda(t)\in(0,1)$ by requiring that the same set $\mathcal P$ satisfies, for every $t\ge 0$ and every $w(t)\in\mathcal Z_w$, that $x(t)\in\mathcal P$ implies $x(t{+}1)\in \lambda(t)\,\mathcal P$.
\end{definition}

We aim to ensure constraint satisfaction and robustness for the uncertain system~\eqref{system} by developing a TMPC scheme that leverages closed-loop learning and prior physical knowledge using only brief noisy data. A short input–state batch is used to construct uncertainty sets consistent with both data and priors, which are then mapped to their closed-loop counterparts. These sets drive a zonotopic error-tube initialization that supports iterative refinement of both the tube and the feedback gain.

 % We aim to guarantee constraint satisfaction and robustness for the uncertain system~\eqref{system} by proposing a novel TMPC approach within the closed-loop, prior-informed framework, using only brief online noisy data. Rather than relying on long identification runs, we represent uncertainty via model sets that are jointly consistent with the data and the priors (Assumptions~\ref{dist}–\ref{PK}). Concretely, we (i) collect a short input–state batch, (ii) construct a data-consistent open-loop model set and tighten it using the prior physical knowledge, and (iii) map this set to its closed-loop counterpart. These open- and closed-loop model sets are then used in the error dynamics to initialize a zonotopic disturbance tube, which subsequently serves as the basis for iteratively updating the tube and the feedback gain.

\subsection{Prior-aware Modeling with Minimal Online Data}
To obtain a data-driven representation of the (open- and closed-loop) dynamics corresponding to system~\eqref{system}, we excite the system over a short time interval of length $T$ and collect input–state samples \cite{tesi}. The control inputs are organized as
\begin{align} \label{data-u}
U_0 := \begin{bmatrix} u(0) & u(1) & \cdots & u(T-1) \end{bmatrix} \in \mathbb{R}^{m \times T}.
\end{align}

The corresponding noisy state measurements are arranged as
\begin{align} \label{data-x0}
X_0 &:= \begin{bmatrix} x(0) & x(1) & \cdots & x(T-1) \end{bmatrix} \in \mathbb{R}^{n \times T},\\\label{data-x1}
X_1 &:= \begin{bmatrix} x(1) & x(2) & \cdots & x(T) \end{bmatrix} \in \mathbb{R}^{n \times T}. 
\end{align}

We define the stacked data matrix as
\begin{align*}
  D_0 := \begin{bmatrix}
X_0 \\ U_0
\end{bmatrix} \in \mathbb{R}^{(n+m) \times T}.
\end{align*}

The following assumption establishes the data-richness condition required for controller learning and is standard in data-driven control design \cite{tesi,ID3}. 

\begin{assumption}\label{assumption_5}
The data matrix $D_0$ has full row rank.
\end{assumption}

Let the unknown and unmeasurable disturbance sequence be
\begin{equation*}
W_0 :=  \begin{bmatrix} w(0) & w(1) & \ldots & w(T-1) \end{bmatrix} \in \mathbb{R}^{n \times T},
\end{equation*}
which is assumed to be unavailable for control design. The disturbance sequence is modeled by a matrix zonotope, formed via $T$-concatenation of the disturbance set $\mathcal{Z}_w$, given by \cite{ID3} 
\begin{align} \label{T-dis}
    \mathcal{M}_{\mathcal{Z}_w^T} := \langle G_w, C_w \rangle,
\end{align}
where $G_w = [\,G_w^{:,1},\,\ldots,\,G_w^{:,T s_w}\,]$ with each $G_w^{:,i} \in \mathbb{R}^{n\times T}$ is the generator matrix and $C_w \in \mathbb{R}^{n \times T}$ is the center matrix. It is assumed that $W_0 \in \mathcal{M}_{\mathcal{Z}_w^T}$.

\subsubsection{Open-loop Model Sets Consistent with Data and Prior Knowledge}
Using the data matrices defined in~\eqref{data-u}--\eqref{data-x1} and the system dynamics~\eqref{system}, we obtain the following relation
\begin{equation}\label{system-clN1} 
X_1 = A^{\star} X_0 + B^{\star} U_0 + W_0.
\end{equation}

Since the disturbance sequence satisfies $W_0\in\mathcal{M}_{\mathcal{Z}_w^T}=\langle G_w, C_w\rangle$, 
we write $W_0 = C_w + G_w \zeta_w$ with $\|\zeta_w\|\le 1$.  
Therefore,
\begin{align} \label{eqdata1}
   X_1 - C_w = \theta^\star D_0 + G_w \zeta_w, 
\end{align}
where $\theta^\star = [A^\star \quad B^\star]$. Right–multiplying \eqref{eqdata1} by the right pseudoinverse $D_0^\dagger$ yields
\[
\theta^\star = (X_1 - C_w)D_0^\dagger - (G_w \zeta_w)D_0^\dagger.
\]

Using the definition of the matrix zonotope and the Kronecker operator
$G_w\circ D_0^\dagger$, we obtain matrices $\theta = [A \quad B]$ (i.e., open-loop models) consistent with data as
\begin{align}
\mathcal{M}_{ol}
=
\left\langle
 -G_w\circ D_0^\dagger,\;
 (X_1 - C_w)D_0^\dagger
\right\rangle.
\label{olset}
\end{align}

However, not all disturbance realizations lead to a feasible solution for $\theta$. We therefore discard disturbances that are incompatible with the observed data. Specifically, the data equation \eqref{eqdata1}, which is $X_1 - C_w - G_w \zeta_w = \theta^\star D_0$, admits a solution $\theta$ only if the left-hand side lies in the column space of the right-hand side, i.e., the column space of $D_0$ \cite{ID5}. That is,
\[
\left(X_1 - C_w - G_w\zeta_w\right) D_0^\perp = 0,
\]
where $D_0^\perp$ spans the kernel of $D_0$.
Rearranging gives
\[
(X_1 - C_w) D_0^\perp
=
(G_w\zeta_w)\, D_0^\perp.
\]

Since $G_w\zeta_w = \sum_{i=1}^{T s_w} G_w^{:,i}\,\zeta_w^i$,
we finally obtain the data-consistency constraint
\begin{align}
(X_1 - C_w) D_0^\perp
=
\sum_{i=1}^{T s_w}
G_w^{:,i} D_0^\perp \,\zeta_w^i,
\qquad
\|\zeta_w\| \le 1.
\label{dcons}
\end{align}

% Since the disturbance realization \( W_0 \) is unknown but $W_0 \in \mathcal{M}_{\mathcal{Z}_w^T}$, the set of all open-loop systems $[A \quad B]$ that are data-consistent is described by the matrix zonotope,
% \begin{align}\label{olset}
% \mathcal{M}_{ol} := \left\langle -G_w \circ D_0^{\dagger}, \, (X_1 - C_w) D_0^{\dagger} \right\rangle.
% \end{align}

% Inspired by \cite{unifying}, we now tighten this set by first imposing data constraints and then the prior constraints. Using the disturbance representation in \eqref{T-dis} and applying the Fredholm alternative in \cite{ID5,ID3}, the data equation \eqref{system-clN1} admits a solution if and only if
% \begin{align}\label{dcons}
% (X_1 - C_w) D_0^{\bot} = \sum_{i=1}^{T s_w} G^{:,i}_{w} D_0^{\bot} \zeta^i,
% \end{align}
% for some vector \( \zeta \in \mathbb{R}^{T s_w} \) satisfying \( \| \zeta \| \leq 1 \). Here, \( D_0^{\bot} \) spans the kernel of \( D_0 \), and the equality ensures consistency of the disturbance with the observed data.

By enforcing this constraint, the disturbance matrix zonotope in \eqref{T-dis} is refined to a constrained matrix zonotope, $\mathcal{M}_{w} = \langle G_w, C_w, A_w, B_w \rangle$,
where the constraint matrices \( A_w \) and \( B_w \) are defined as
\begin{subequations}\label{Aw}
\begin{align}
A_w &= G_w \circ D_0^{\bot} \in \mathbb{R}^{n \times (T-(m+n)) Ts_w}, \\
B_w &= (X_1 - C_w) D_0^{\bot} \in \mathbb{R}^{n \times (T-(m+n))}.
\end{align}
\end{subequations}

This refined disturbance set can then be used to further tighten the open-loop model set~\eqref{olset}. Before doing so, we also exclude disturbance realizations that would generate dynamics $\theta$ incompatible with the prior knowledge.
 From the data relation
\[
X_{1}=\theta\,D_{0}+W_{0},
\quad\Longrightarrow\quad
W_{0}=X_{1}-\theta D_{0},
\]
and by Assumption~\ref{PK} (i.e., $\theta\in\mathcal{M}_{prior}$), the
prior-consistent disturbance set is the matrix zonotope
\begin{equation*}
\mathcal{M}_{d}
\;=\;
\Big\langle
-G_{\theta}\!\circ\! D_{0},\;
X_{1}-C_{\theta}D_{0},\;
A_{\theta},\;
B_{\theta}
\Big\rangle,
\end{equation*}
with $G_{\theta}\!\circ\! D_{0}\in\mathbb{R}^{n\times T s_{\theta}}$ and
$X_{1}-C_{\theta}D_{0}\in\mathbb{R}^{n\times T}$. 

Consequently, the disturbances that are both data-realizable and prior-consistent are characterized by \(W_{0}\in\mathcal{M}_{dw}\), where \(\mathcal{M}_{dw}\) is obtained as the intersection $\mathcal{M}_{dw}=\mathcal{M}_{w}\cap\mathcal{M}_{d}$, which admits the constrained matrix-zonotope \cite{unifying}
\begin{align} \label{dw}
    \mathcal{M}_{dw}&= \Bigg< \!\Big[G_w  \!\quad\! \mathbf{0}_{n \times Ts_{\theta}} \Big], C_w ,\!\begin{bmatrix}
        \bar A_w  & \mathbf{0}  \\ 
        \mathbf{0} & \bar A_{\theta} \\
       \bar G_w & -\bar G_\theta
    \end{bmatrix}\!,\!\begin{bmatrix}
        \bar B_w  \\ 
       \bar B_{\theta} \\
       \bar C_\theta - \bar C_w 
    \end{bmatrix}\! \Bigg>
   \nonumber \\ & =\big<G_{dw},C_{dw},A_{dw},{B}_{dw} \big>,
\end{align}
where \(C_{dw}=C_w\) and \(\bar A_w,\bar B_w,\bar G_w,\bar A_\theta,\bar B_\theta,\bar G_\theta\) are zero-padded counterparts of \(A_w,B_w,G_w,A_\theta,B_\theta,G_\theta\) (Table~1, \cite{unifying}); when dimensions align, the barred matrices are identical to the originals, otherwise zero padding ensures consistency of the stacked blocks \(A_{dw}\) and \(B_{dw}\).

Accordingly, the open-loop system set in \eqref{olset}, refined to be consistent with both the data and the prior, is given by
\begin{align}\label{olset1}
{\mathcal{M}}^{c}_{{ol}} := \left\langle -G_{dw} \circ D_0^{\dagger}, \; (X_1 - C_{dw}) D_0^{\dagger}, \; A_{dw}, \; B_{dw} \right\rangle,
\end{align}
with $G_{dw}$, $C_{dw}$, $A_{dw}$ and $B_{dw}$ defined in \eqref{dw}.

\subsubsection{Closed-loop Parametrized Sets Consistent with Data and Prior Knowledge}

While the refined open-loop system set can be used to robustly satisfy control specifications, the models that best fit the data do not necessarily yield the best control performance. Control-oriented approaches instead directly characterize the closed-loop dynamics and learn a decision variable that parameterizes these dynamics to meet the desired specifications. We will show that, within our TMPC framework, this closed-loop characterization leads to significantly reduced conservatism. 

To construct closed-loop system sets, we parametrize the control gain via a time-varying matrix \( V_K(t) \in \mathbb{R}^{T \times n} \) such that \cite{tesi} 
\begin{subequations} \label{tesi1}
\begin{align}
    &K(t) = U_0 V_K(t),\\
    &X_0 V_K(t) = \mathbf{I}_n.
\end{align}
\end{subequations}

Then, multiplying both sides of \eqref{system-clN1} by $V_K(t)$ and using \eqref{tesi1}, the data-driven closed-loop system simplifies to
%\label{tesi}
\begin{align*}%\label{system-cl2}  
A^{\star} + B^{\star} K(t) = (X_1 - W_0) V_K(t).
\end{align*}

Since $W_0$ belongs to a set, the closed-loop dynamics associated with a given gain $K(t)$ (or $V_K(t)$) are also represented by a set. However, this set can be overly conservative if the disturbance set associated with disturbance realization $W_0$ is not refined, as is common in most existing methods. To reduce this conservatism, in this paper, we leverage the refined disturbance set represented in~\eqref{dw}. 

For a given matrix $V_K(t)$ and knowing that $W_0 \in \mathcal{M}_{dw}$, the set of time-varying closed-loop system matrices consistent with both the data and the prior knowledge at time $t$ is contained in the constrained matrix zonotope
\begin{align}\label{clset}
\mathcal{M}^{c}_{{cl}}(t) \!:=\! \left\langle -G_{dw}\! \circ\! V_K(t), (X_1-C_{dw}) V_K(t),A_{dw},B_{dw} \right\rangle,
\end{align}
with $G_{dw}$, $C_{dw}$, $A_{dw}$ and $B_{dw}$ defined in \eqref{dw}.  

\begin{remark}
The equality constraints imposed by $A_{dw}$ and $B_{dw}$ in the set description of \eqref{clset} define hyperplanes that intersect the original matrix zonotope of feasible disturbance realizations. This intersection refines the disturbance set and reduces its overall size. Consequently, the resulting closed-loop model set associated with a given control gain $K(t)$ \textup{(}parameterized via $V_K(t)$\textup{)} becomes tighter. This refinement significantly improves feasibility and performance, since the decision variable $V_K(t)$ is learned directly such that the resulting closed-loop model set satisfies the control specifications. In our TMPC framework, the closed-loop model set is repeatedly refined as new data become available, thereby significantly enlarging the region of attraction and improving closed-loop performance.
\end{remark}

\subsection{Problem Statement}
% Given the uncertain system~\eqref{system}, the data--prior-consistent open-loop model set $\mathcal{M}^{c}_{ol}$ in \eqref{olset1}, and the resulting time-indexed closed-loop model set $\mathcal{M}^{c}_{cl}(t)$ in \eqref{clset}, our goal is to compute, at each time $t$, a control 
% sequence that minimizes a standard MPC cost function while ensuring 
% that, for every model in $\mathcal{M}^{c}_{cl}(t)$ and every admissible disturbance realization, the actual 
% closed-loop trajectory satisfies $x(t)\in\mathcal{X}$ and $u(t)\in\mathcal{U}$ for all $t\ge 0$. Beyond robust constraint satisfaction, we aim to actively reduce conservatism by shrinking the underlying uncertainty and error tube over time, thereby enlarging the feasible region and enhancing disturbance tolerance and thus performance, while relying only on brief online noisy data and prior physical knowledge.

% To realize these objectives under explicit state and input constraints, we next develop an elastic, zonotopic TMPC framework built on the model sets $\mathcal{M}^{c}_{ol}$ and $\mathcal{M}^{c}_{cl}(t)$, which jointly enable closed-loop learning, uncertainty refinement, and adaptive tube and gain design.

Given the uncertain system~\eqref{system} and the data–prior-consistent model sets $\mathcal{M}^{c}_{ol}$ and $\mathcal{M}^{c}_{cl}(t)$, our goal is to compute, at each time $t$, an MPC control sequence that guarantees $x(t)\in\mathcal{X}$ and $u(t)\in\mathcal{U}$ for all admissible models and disturbances. We further aim to reduce conservatism by shrinking uncertainty and error tubes over time using brief online data and prior knowledge. To achieve this, we develop an elastic, zonotopic TMPC framework that enables closed-loop learning, uncertainty refinement, and adaptive tube–gain design.

\section{Elastic Tube-based MPC Framework}
Before presenting our elastic TMPC scheme, we first highlight the complementary roles of the 
open-loop and closed-loop model sets constructed in Section~2. We used the open-loop set 
$\mathcal{M}^c_{{ol}}$ in \eqref{olset1} for separating the nominal model from the model mismatch in 
the error dynamics. However, relying solely on open-loop descriptions—common in most existing 
data-driven TMPC methods—forces the error bound to depend on the full admissible sets 
$\mathcal{X}$ and $\mathcal{U}$, which leads to large and conservative tubes, especially when the 
operating region is wide.  To overcome this limitation, we additionally construct a refined closed-loop model set $\mathcal{M}^c_{{cl}}(t)$ in \eqref{clset} that directly characterizes the closed-loop matrices $A^\star + B^\star K(t)$ consistent with the data and the priors. Using this closed-loop representation, the error dynamics depend only on the nominal quantities $(\bar{x}(t), \bar{u}(t))$, not on the entire state and input sets, and therefore yield significantly tighter error propagation. As we will show, leveraging both the open-loop model mismatch structure and the refined closed-loop dynamics enables a tube description that is substantially less conservative than approaches based solely on open-loop model sets, while simultaneously allowing the 
ancillary gain $V_K(t)$ to be learned in a control-oriented manner that contracts the tube size.

\subsection{Error and Tube Representation}
To design an elastic TMPC controller, we first decompose the uncertain system~\eqref{system} into a nominal system and an associated error dynamics. For the nominal part, we select the nominal system matrices \( \bar{A} \) and \( \bar{B} \) as the center of the open-loop matrix zonotope~\eqref{olset1}, i.e.,
\begin{align*}
\begin{bmatrix} \bar{A} & \bar{B} \end{bmatrix} := (X_1 - C_{dw}) D_0^{\dagger}.
\end{align*}

The nominal system evolves according to
\begin{equation*}%\label{nominal}
\bar{x}(t+1) = \bar{A} \bar{x}(t) + \bar{B} \bar{u}(t).
\end{equation*}

Thus, the system dynamics \eqref{system} can be redefined as
\begin{equation*}%\label{true-dyn}
x(t+1) = (\bar{A} + \Delta A) x(t) + (\bar{B} + \Delta B) u(t) + w(t),
\end{equation*}
where \(\Delta A = A^{\star} - \bar{A}\) and \(\Delta B = B^{\star} - \bar{B}\) represent model mismatches. Defining the control input of the actual system \eqref{system} as $u(t) = \bar{u}(t) + K(t) e(t)$, with the error state $e(t) := x(t) - \bar{x}(t)$, the error dynamics can be expressed using the closed-loop model set together with the open-loop model mismatch as
\begin{equation}\label{error-dyn}
e(t+1) = (A^{\star} + B^{\star} K(t)) e(t) + \begin{bmatrix} \Delta A & \Delta B \end{bmatrix} \begin{bmatrix} \bar{x}(t) \\ \bar{u}(t) \end{bmatrix} + w(t),
\end{equation}
where \( w(t) \in \mathcal{Z}_w \), and the actual closed-loop matrix satisfies
\(A^{\star} + B^{\star} K(t) \in \mathcal{M}^c_{cl}(t)\), with $\mathcal{M}^c_{cl}(t)$ defined in \eqref{clset}. Furthermore, based on \eqref{olset1}, the open-loop model mismatch pair satisfies
\begin{align*}
\begin{bmatrix} \Delta A & \Delta B \end{bmatrix} &\in \mathcal{M}^c_{ol} - \begin{bmatrix} \bar{A} & \bar{B} \end{bmatrix} \\&:= \left\langle -G_{dw} \circ D_0^{\dagger}, \mathbf{0}_{n\times (n+m)}, A_{dw}, B_{dw} \right\rangle.
\end{align*}

Hence, for given \( \bar{x}(t) \) and \( \bar{u}(t) \), the next error state \( e(t+1) \) belongs to the following constrained zonotope,
\begin{align}\label{error-zonotope}
e(t\!+\!1)& \!\in\!\!\Bigg< \!\!\begin{bmatrix} -G_{dw} \!\circ\! (V_K(t)e(t)) & -G_{dw} \!\circ\! (D_0^\dagger\! \!\begin{bmatrix} \bar{x}(t) \nonumber \\ \bar{u}(t) \end{bmatrix}) & G_h\!\end{bmatrix}\!,\\
&(X_1 - C_{dw}) V_K(t)\, e(t) + c_h,\text{vec}(A_C),\text{vec}(B_C) 
\!\Bigg>,
\end{align}
where 
%\begin{subequations}%\label{AcBc}
\begin{align*} 
 &  A_C=\begin{bmatrix}
         A_{dw}  & \mathbf{0} & \mathbf{0} \\
         \mathbf{0} & A_{dw}  &  \mathbf{0}\\
          \mathbf{0} &  \mathbf{0} &  \mathbf{0}
    \end{bmatrix}, \quad  B_C=\begin{bmatrix}
         B_{dw}  \\ 
         B_{dw}\\
        \mathbf{0}
    \end{bmatrix}. 
\end{align*}
%\end{subequations}

According to~\eqref{error-zonotope}, the error set at time $t$ is a bounded constrained zonotope
and therefore a polytope. Hence, it admits an equivalent half-space representation
\begin{equation*} %\label{eset}
\mathcal{E}(t)
=
\mathcal{P}\bigl(H_e(t),h_e(t)\bigr)
:=
\{\, e\in\mathbb{R}^n \;|\; H_e(t)\, e \le h_e(t) \,\},
\end{equation*}
for some matrix $H_e(t)\in \mathbb{R}^{q_e\times n}$ and vector $h_e(t)\in\mathbb{R}^{q_e}_{>0}$.

%According to \eqref{error-zonotope}, the error set can be represented as a time-varying polytope,
%\begin{equation} \label{eset}
%\mathcal{E}(t)=\mathcal{P}\big(H_e,h_e(t)\big)
%:= \{\, e\in\mathbb{R}^n \;|\; H_e e \le h_e(t) \,\},
%\end{equation}
%with \(H_e\in\mathbb{R}^{q\times n}\) and \(h_e(t)\in\mathbb{R}^{q}_{>0}\).

\begin{remark}
A primary source of conservatism in existing data-driven TMPC methods stems from deriving the error propagation exclusively under the open-loop dynamics, together with a fixed ancillary feedback gain $K$ selected a priori. Specifically, given a learned open-loop model pair $(A,B)$ and a fixed stabilizing gain $K$, the deviation evolves as
\begin{equation}\label{error-dyn2}
e(t+1) = (\bar A + \bar B K) e(t) + \begin{bmatrix} \Delta A & \Delta B \end{bmatrix} \begin{bmatrix} {x}(t) \\ {u}(t) \end{bmatrix} + w(t),
\end{equation}
which requires the tube to scale with the sizes of both $\mathcal{X}$ and $\mathcal{U}$. When these sets are large, the resulting error tube becomes excessively conservative, which is a typical drawback of open-loop–based schemes (see, e.g.,~\cite{un2}). In contrast, the proposed framework mitigates this inflation by explicitly modeling the closed-loop deviation dynamics. Leveraging the refined disturbance set $\mathcal{M}_{dw}$ in \eqref{dw} and the gain parametrization $K(t)=U_0V_K(t)$, the deviation evolves by \eqref{error-zonotope}, which depends solely on the \emph{nominal} values $(\bar{x}(t),\bar{u}(t))$, rather than the full admissible sets $\mathcal{X}$ and  $\mathcal{U}$. This eliminates the dependence of the error tube on the operating region size, producing significantly tighter tubes. Furthermore, the decision variable $V_K(t)$ is optimized to enforce contraction of the error tube. As a result, the closed-loop characterization not only avoids the conservatism inherent in open-loop error propagation but also enables systematic reduction of tube uncertainty, thereby enlarging the 
region of attraction and improving closed-loop performance.
\end{remark} 

\begin{assumption}\label{ass:tube-shape}
The tube uses the same facet directions as the admissible state set, i.e., $H_e(t) \equiv H_x =: H_e$ for all $t$. Consequently, the tube cross-section at time $t$ is 
\begin{equation} \label{eset}
\mathcal{E}(t)
:=
\mathcal{P}\bigl(H_e,h_e(t)\bigr)
=
\{\, e\in\mathbb{R}^n \mid H_e e \le h_e(t) \,\},
\end{equation}
where $h_e(t) \in \mathbb{R}^q_{>0}$ is a time-varying scaling vector.
\end{assumption}

% \textcolor{red}{\begin{assumption} \label{ass:tube-shape}
% The error-tube shape is aligned with the admissible state set defined in Assumption~\ref{safeset}, i.e., we fix $H_e(t) = H_x$ for all $t$ and, for brevity, denote this fixed matrix by $H_e$, so that $\mathcal{E}(t)=\mathcal{P}\bigl(H_e,h_e(t)\bigr)$ shares the same facet directions as $\mathcal{X}$. The scaling vector $h_e(t)$, which encodes the tube size along these directions, is not known a priori and is allowed to vary with time.
% \end{assumption}}
\begin{remark}
Under Assumption~\ref{ass:tube-shape}, the tube is $\mathcal{E}(t)=\mathcal{P}(H_e,h_e(t))$, where the fixed facet matrix $H_e$ aligns the tube geometry with the admissible set $\mathcal{X}$ and $h_e(t)$ scales the tube over time. The vector $h_e(t)$ is chosen so that the constrained-zonotope error set~\eqref{error-zonotope} lies inside $\mathcal{E}(t)$, yielding a polyhedral outer approximation. This alignment simplifies TMPC: (i) state tightening reduces to updating $h_x-h_e(t)$ without recomputing facet orientations, and (ii) only $h_e(t)$ is updated online, substantially reducing decision variables and constraints. Although optimizing or enriching $H_e$ could reduce conservatism, it greatly increases complexity; fixing $H_e$ therefore provides an effective balance between accuracy and tractability.
% Under Assumption~\ref{ass:tube-shape}, the tube takes the form $\mathcal{E}(t) = \mathcal{P}(H_e, h_e(t))$, where $h_e(t)$ is a time-varying scaling vector that determines the tube size along the fixed facet directions. Fixing $H_e(t)$ is a practical modeling choice that aligns the tube geometry with the admissible state set $\mathcal X$. Under this choice, the vector $h_e(t)$ is selected such that the constrained-zonotope error set in~\eqref{error-zonotope} is contained within $\mathcal{E}(t)=\{ e \mid H_e e \le h_e(t)\}$, making $\mathcal{E}(t)$ a 
% polyhedral outer approximation of the actual error set. This alignment has two main advantages: (i) the tightened state constraints remain aligned with the facet directions of $\mathcal{X}$, so constraint tightening reduces to updating the bounds $h_x -h_e(t)$ without recalculating new facet orientations, which significantly simplifies the constraint–tightening step in the TMPC formulation, and (ii) it reduces the online complexity since only the vector $h_e(t)$
%  is updated over time, while the facet directions remain fixed, which substantially reduces the number of decision variables and tube inequalities in the MPC optimization.
%  In principle, one could optimize over $H_e(t)$ or enrich it with additional facet directions to reduce conservatism. However, this flexibility substantially increases the number of constraints and decision variables. Thus, adopting a fixed matrix $H_e$ represents a deliberate trade-off between accuracy and computational tractability.
\end{remark}

%\textcolor{red}{\begin{assumption} \label{terminal}
%The terminal set $\mathcal{T}$ is positively invariant for the nominal closed-loop system under the terminal control law $k_f(\cdot)$ and satisfies the tightened state and input constraints in~\eqref{TMPCX}--\eqref{TMPCU}.
%\end{assumption}}

% We now leverage the error dynamics \eqref{error-zonotope} to design an elastic TMPC. In TMPC, the controller does not plan on the actual state trajectory, which is affected by disturbances and model uncertainty, but instead computes a nominal trajectory over the prediction horizon. Robust constraint satisfaction is then enforced by ensuring that the actual trajectory remains inside a time-varying error tube $\mathcal{E}(t)$ surrounding the nominal one. To guarantee safety, each nominal prediction is required to lie inside a tightened version of the state and input constraints, obtained by subtracting 
% the tube cross-section from the original admissible sets. In this way, the feasibility of the nominal MPC 
% problem implies the feasibility of the actual closed-loop system for all uncertainty realizations belonging to the learned model sets. The key challenge is therefore to design a tube $\mathcal{E}(t)$ and an ancillary gain $V_K(t)$ that contract the error, reduce conservatism, and enlarge the feasible region. This motivates the tube–gain 
% update problem introduced next.

We now use the error dynamics \eqref{error-zonotope} to design an elastic TMPC. Instead of planning on the disturbance-affected state, TMPC optimizes a nominal trajectory and ensures robustness by keeping the actual trajectory within a time-varying error tube $\mathcal{E}(t)$. Safety is enforced by tightening the state and input constraints using the tube cross-sections, so feasibility of the nominal MPC guarantees feasibility of the true closed loop for all models in the learned sets. The main challenge is to design a tube $\mathcal{E}(t)$ and ancillary gain $V_K(t)$ that contract the error, reduce conservatism, and enlarge the feasible region, motivating the tube–gain update problem introduced next.

\subsection{Elastic and Adaptive Tube-based MPC}
Given the horizon length $N$, the decision variable $V_K(t)$, the error tube $\mathcal{E}(t)$, $\mathcal{X}$, $\mathcal{U}$, and terminal set $\mathcal{T}$, the TMPC framework can be formulated as
%\bar{x}(N)^\top P \bar{x}(N)
\begin{subequations} \label{TMPC}
\begin{align}
&\min_{\{\bar{u}(t|t),\dots,\bar{u}(t+N-1|t)\}}  V_\mathcal{T} + \sum_{k=0}^{N-1} \mathcal{L}({\bar x(k|t),\bar u(k|t)}), \\
&\text{s.t.}\quad  \forall k \in \{0, \dots, N-1\}:\nonumber\\
\quad & \bar{x}(t+k+1|t) = \bar{A} \bar{x}(t+k|t) + \bar{B} \bar{u}(t+k|t),\\\label{TMPCX}
& \bar{x}(t+k|t) \oplus \mathcal{E}(t) \subseteq \mathcal{X},  \\\label{TMPCU}
& \bar{u}(t+k|t) \oplus  U_0 V_K(t) \mathcal{E}(t) \subseteq \mathcal{U},  \\\label{TMPCT}
& \bar{x}(t+N|t) \in \mathcal{T},
\end{align}
\end{subequations}
where $V_{\mathcal T}$ is the terminal cost, and the stage cost is
\[
\mathcal{L}\big(\bar x(k|t),\bar u(k|t)\big)
:= \bar x(k|t)^\top Q\,\bar x(k|t)
 + \bar u(k|t)^\top R\,\bar u(k|t),
\]
with $Q = Q^\top \succ 0$ and $R = R^\top \succ 0$ denoting the given state
and input weighting matrices, respectively. 

\begin{assumption}\label{terminal}
A terminal set $\mathcal{T}$, a terminal control law $K_\mathcal{T}(\cdot)$, and a terminal cost $V_{\mathcal{T}}:\mathbb{R}^n\to\mathbb{R}_{\ge 0}$ are given such that:
\begin{enumerate}[label=(\roman*)]
\item $\mathcal{T}$ is nonempty, compact, contains the origin, and satisfies the tightened
state and input constraints in \eqref{TMPCX}--\eqref{TMPCU} at $k = N$ under $K_\mathcal{T}(\cdot)$.
\item $\mathcal{T}$ is positively invariant for the nominal closed-loop, i.e.,
      for all $\bar x \in \mathcal{T}$,
      $ \bar A \bar x + \bar B K_\mathcal{T}(\bar x) \in \mathcal{T}$.
\item $V_{\mathcal{T}}$ is positive definite on $\mathcal{T}$ and, for all $\bar x \in \mathcal{T}$,
      %\[
$      V_{\mathcal{T}}(\bar A \bar x + \bar B K_\mathcal{T}(\bar x)) \;\le\;
      V_{\mathcal{T}}(\bar x) - \mathcal{L}\big(\bar x, K_\mathcal{T}(\bar x)\big).$
      %\]
\end{enumerate}
\end{assumption}

\begin{assumption}\label{init}
The initial nominal state $\bar{x}(0|0)$ is selected such that $x(0) \in \bar{x}(0|0) \oplus \mathcal{E}(0)$.
\end{assumption}
\begin{remark}
Assumption~\ref{init} is a standard requirement in TMPC. It states that the actual initial state $x(0)$ lies in the tube obtained by shifting the error set tube $\mathcal{E}(0)$ around the nominal state $\bar{x}(0|0)$. This condition is the starting point to guarantee recursive feasibility and constraint satisfaction of TMPC.
\end{remark}

% : if it fails at $t=0$, then the tube does not correctly bound the state, so robust constraint satisfaction and invariance cannot be guaranteed at future times. It enforces consistency between the nominal and actualinitial conditions by requiring that the actual initial state $x(0)$ lies inside $\bar{x}(0|0) \oplus \mathcal{E}(0)$. 
%where $P$ is obtained by solving the discrete-time algebraic Riccati equation (DARE) for some symmetric and positive-definite matrices $Q$ and $R$ as
%\begin{align*}
%    \label{P}
%    &\bar A^\top P \bar A - P-
%    (\bar A^\top P \bar B)(R + \bar B^\top P \bar B)^{-1}\times\nonumber\\
%    &(\bar B^\top P \bar A) + Q = 0.
%\end{align*}

In the proposed TMPC framework~\eqref{TMPC}, the nominal control input $\bar{u}(t)$ is designed based on an error tube $\mathcal{E}(t)$ and an ancillary feedback gain $V_K(t)$. Unlike many TMPC schemes that use a fixed, offline-designed tube and gain, the proposed method iteratively contracts $\mathcal{E}(t)$ and updates
$V_K(t)$ online. That is, the proposed framework performs the following two-stage procedure: 
\noindent 1) \emph{Tube-based MPC step:} At every time step $t$, it solves the nominal MPC problem \eqref{TMPC} to compute the nominal state and 
input sequences $(\bar{x}(t+k|t),\bar{u}(t+k|t))$ for $k = 0, \cdots, N-1$, and apply to the plant the input
\begin{equation} \label{optu}
    u(t)=\bar{u}^{\star}(t|t)+U_{0}V_{K}(t)e(t),
\end{equation}
where $\bar{u}^\star(t\mid t)$ is the first element of the optimal input sequence $\bar{u}^\star(t)=[\bar{u}^\star(t\mid t),...,\bar{u}^\star(t+N-1\mid t)]$ of~\eqref{TMPC} at time $t$.

\noindent 2) \emph{Tube–gain update step:} Using the newly obtained trajectory after each MPC step, the framework updates the ancillary gain $V_K(t)$, which in turn refines the closed-loop model set $\mathcal{M}^{c}_{cl}(t)$ and determines a new tube scaling that enforces tube shrinkage. The updated $V_K(t)$ and refined tube are then applied to the MPC problem at the next time step.

For the second step, $V_K(t)$ is re-optimized using the following theorem to yield a $\lambda(t)$-contractive error tube. Fig.~\ref{diagram} illustrates this closed-loop interaction 
between nominal planning and tube–gain learning. 

% This joint update shrinks the tube,so the tightened sets in constraints \eqref{TMPCX}–\eqref{TMPCT} become less conservative, which in turn enlarges the feasible set of the nominal MPC problem. 
% The following theorem characterizes this tube–gain update step which yields a $\lambda(t)$-contractive error tube.

\begin{theorem} 
\label{thm-LPf}
Consider the system~\eqref{system} with error dynamics~\eqref{error-zonotope} under
Assumptions~1--8, and let $\mathcal{E}(t)$ denote the error tube at time $t$ as defined in~\eqref{eset}.
Suppose that, at time $t$, the optimization problem
\begin{subequations}
\label{LPf}
\begin{align} 
  &\min_{P(t),\,V_K(t),\,\rho(t),\,\lambda(t)} \quad  \rho(t) + \sigma \lambda(t), \\
  &\quad\text{s.t.} \nonumber \\\label{21b}
  &\quad P(t)\, h_e(t) \le \lambda(t)\, h_e(t) - H_e c_h - \rho(t)\, l(t) - z(t) - y, \\
  &\quad P(t)\, H_e = H_e \bigl(X_1 - C_{dw}\bigr) V_K(t), \\\label{hupdate}
  &\quad h_e(t{+}1) = \lambda(t)\, h_e(t),\\
  &\quad \| V_K(t) \| \le \rho(t), \\
  &\quad X_0 V_K(t) = \mathbf{I}_n, \\
  &\quad P(t) \ge \mathbf{0}_{q \times q}, \\
  &\quad 0 < \lambda(t) < 1,
\end{align}
\end{subequations}
with $\sigma > 0$ admits a solution $(P(t), V_K(t), \rho(t), \lambda(t))$.
Then the tube $\mathcal{E}(t)$ is $\lambda(t)$-contractive
(Definition~\ref{contract}), and hence it is an RPI set for the error dynamics \eqref{error-zonotope}. Here $y=[y^1,\dots,y^q]^{\top}$, $l(t)=[l^1(t),\dots,l^q(t)]^{\top}$, and $z(t)=[z^1(t),\dots,z^q(t)]^{\top}$,
where $y^j$, $l^j(t)$, and $z^j(t)$ are defined as
\begin{align} \label{yj}
& y^j  = \sum\limits_{i=1}^{s_w} \big| {H^{j,:}_e}  {G^{:,i}_h} \big|, 
\end{align}
%\vspace{-15pt}
\begin{subequations}
\label{lj}
\begin{align}
  l^j(t)  = {M_e}(t)\max\limits_{\beta} \,  & \big\|\sum\limits_{i=1}^{s_c}   \beta^i H^{j,:}_{e} G^{:,i}_{dw}  \big\Vert,     \,   \\ {\rm{s}}{\rm{.t}}{\rm{.}}\quad & \sum\limits_{i=1}^{s_c} {A}^{:,i}_{dw}\beta^i={B}_{dw}, \\  
  & \big\Vert \beta  \big\Vert \leq 1, 
\end{align}
\end{subequations}
%\vspace{-15pt}
\begin{subequations}
\label{zj}
\begin{align}
  z^j(t)  = \max\limits_{\beta} \,  & \big\|\sum\limits_{i=1}^{s_c}   \beta^i {{H^{j,:}_e}G^{:,i}_{dw}} D_0^\dagger\begin{bmatrix} \bar{x}(t) \\ \bar{u}(t) \end{bmatrix}  \big\Vert,     \,   \\ {\rm{s}}{\rm{.t}}{\rm{.}}\quad & \sum\limits_{i=1}^{s_c} {A}^{:,i}_{dw}\beta^i={B}_{dw}, \\  
  & \big\Vert \beta  \big\Vert \leq 1, 
\end{align}
\end{subequations}
respectively, where $s_c = Ts_w+s_\theta$.
\end{theorem}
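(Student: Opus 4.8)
The plan is to prove the contraction inequality $H_e\,e(t{+}1)\le \lambda(t)\,h_e(t)=h_e(t{+}1)$ directly, since by the half-space description \eqref{eset} this is exactly the statement $e(t{+}1)\in\lambda(t)\,\mathcal{E}(t)=\mathcal{P}\bigl(H_e,\lambda(t)h_e(t)\bigr)$, i.e. $\lambda(t)$-contractivity, with the identity $\lambda(t)h_e(t)=h_e(t{+}1)$ supplied by \eqref{hupdate}. First I would fix an arbitrary row index $j$ and an arbitrary admissible realization of \eqref{error-zonotope}: a closed-loop matrix in $\mathcal{M}^c_{cl}(t)$, a mismatch pair in $\mathcal{M}^c_{ol}-[\bar A\ \bar B]$, and a disturbance $w(t)=c_h+G_h\zeta_w$ with $\|\zeta_w\|\le1$. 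Writing $H_e^{j,:}e(t{+}1)$ out, it splits into six pieces: the closed-loop \emph{center} $H_e^{j,:}(X_1-C_{dw})V_K(t)e(t)$, the closed-loop \emph{generator} term, the mismatch center (which vanishes, since the mismatch zonotope is centered at $\mathbf 0$), the mismatch generator term, the disturbance center $H_e^{j,:}c_h$, and the disturbance generator term. The goal is to upper-bound the five nonzero pieces by the five terms appearing on the right-hand side of \eqref{21b}.

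The center piece is handled by the nonnegative multiplier $P(t)$, which is the crux of the polyhedral-invariance argument. Since $e(t)\in\mathcal{E}(t)$ gives $H_e\,e(t)\le h_e(t)$ componentwise, and $P(t)\ge\mathbf 0$, multiplying this inequality on the left by $P(t)$ preserves its direction: $P(t)H_e\,e(t)\le P(t)h_e(t)$. The equality constraint $P(t)H_e=H_e(X_1-C_{dw})V_K(t)$ then identifies the left-hand side as precisely the closed-loop center contribution, so that $H_e(X_1-C_{dw})V_K(t)e(t)\le P(t)h_e(t)$. This is the standard Farkas-type step that converts the invariance requirement into the linear feasibility condition \eqref{21b}.

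Next I would bound the three uncertainty contributions, taking care that the constrained-zonotope equality constraints are respected. Because the block-equality data $(A_C,B_C)$ in \eqref{error-zonotope} are block-diagonal, the generator multipliers decouple into an independent closed-loop block, an independent mismatch block (each feasible iff $\sum_i A_{dw}^{:,i}\beta^i=B_{dw}$, $\|\beta\|\le1$), and an unconstrained disturbance block. For the closed-loop generator term I would factor $\|V_K(t)\|\le\rho(t)$ and $\|e(t)\|\le M_e(t)$ (the tube radius over $\mathcal{E}(t)$) out of $H_e^{j,:}\bigl(G_{dw}\circ V_K(t)\bigr)$, reducing the worst case over feasible $\beta$ to exactly $\rho(t)l^j(t)$ of \eqref{lj}; the mismatch term, evaluated at the nominal pair $[\bar{x}(t);\,\bar{u}(t)]$, reduces to $z^j(t)$ of \eqref{zj}; and the disturbance generator term, being an unconstrained zonotope, is bounded by $\sum_{i}|H_e^{j,:}G_h^{:,i}|=y^j$ of \eqref{yj}. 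Summing the center bound with these three worst cases and $H_e^{j,:}c_h$ gives $H_e^{j,:}e(t{+}1)\le \bigl(P(t)h_e(t)+\rho(t)l(t)+z(t)+H_ec_h+y\bigr)^j$, and substituting \eqref{21b} yields $H_e^{j,:}e(t{+}1)\le\lambda(t)h_e^j(t)$. As $j$ and the realization were arbitrary, $e(t{+}1)\in\lambda(t)\mathcal{E}(t)$, establishing $\lambda(t)$-contractivity; since $0<\lambda(t)<1$ and $\mathcal{E}(t)$ contains the origin, $\lambda(t)\mathcal{E}(t)\subseteq\mathcal{E}(t)$, so $\mathcal{E}(t)$ is RPI for \eqref{error-zonotope}.

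The main obstacle I expect is the closed-loop generator term, where the zonotope multiplier $\beta$ and the product $V_K(t)e(t)$ are intertwined inside $H_e^{j,:}\bigl(G_{dw}\circ V_K(t)\bigr)\beta$. The delicate point is to separate the gain and error magnitudes ($\rho(t)$, $M_e(t)$) from the purely generator-dependent worst case $\max_\beta\|\sum_i\beta^iH_e^{j,:}G_{dw}^{:,i}\|$ without violating the equality constraint $\sum_iA_{dw}^{:,i}\beta^i=B_{dw}$, which is exactly what forces the constrained maximizations in \eqref{lj}--\eqref{zj} rather than the simple generator-sum used for the unconstrained disturbance in \eqref{yj}. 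Verifying that the submultiplicative (induced infinity-norm) bound $\|r\,V_K(t)e(t)\|\le\|r\|\,\rho(t)\,M_e(t)$ holds for the row $r=\sum_i\beta^iH_e^{j,:}G_{dw}^{:,i}$, and confirming that $M_e(t)$ indeed upper-bounds $\|e(t)\|$ over $\mathcal{E}(t)$, are the steps requiring the most care.
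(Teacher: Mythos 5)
Your proposal is correct and follows essentially the same route as the paper's proof: the same facet-wise decomposition of $H_e^{j,:}e(t{+}1)$ into center, closed-loop generator, mismatch generator, and disturbance pieces, the same bounds $\rho(t)l^j(t)$, $z^j(t)$, $y^j$ obtained by factoring $\|V_K(t)\|\le\rho(t)$ and $\|e(t)\|\le M_e(t)$ out of the constrained generator term, and the same role for $P(t)$. The only (cosmetic) difference is direction: the paper derives the constraints of \eqref{LPf} by dualizing the facet-wise maximization $\max_{H_e e\le h_e}H_e^{j,:}(X_1-C_{dw})V_K(t)e(t)$, whereas you verify sufficiency directly via the nonnegative-multiplier (weak-duality) inequality $P(t)H_ee(t)\le P(t)h_e(t)$ — the same argument read in the opposite direction, and arguably the cleaner one for the stated implication.
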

\begin{figure}[t]
    \centering
    \includegraphics[width=0.4\textwidth, keepaspectratio]{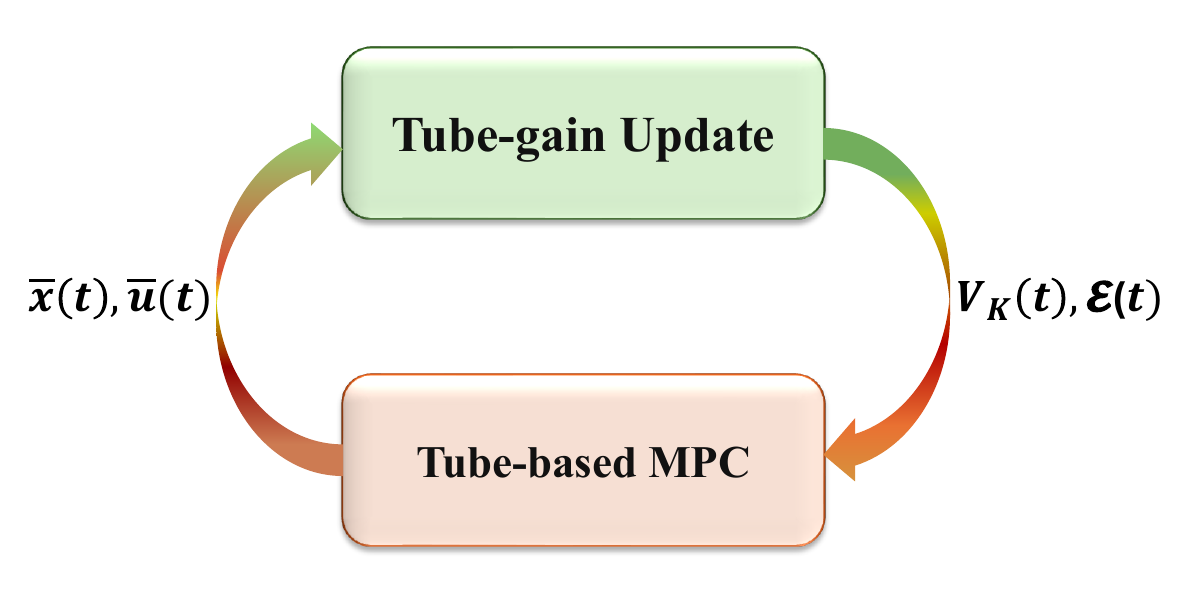} 
    \caption{\small{Alternating MPC and Tube–gain Update. MPC produces the nominal pair $(\bar{x}(t),\bar{u}(t))$, which is passed to the Tube–gain Update module. The module returns the contractivity factor $\lambda(t)$ and ancillary gain $V_K(t)$, defining the next error tube via $h_e(t{+}1)=\lambda(t)\,h_e(t)$; these are fed back to the Tube-based MPC module for the subsequent solve.}}
    \label{diagram}
\end{figure}
\begin{proof}
For the closed-loop error dynamics \eqref{error-dyn} and a contraction factor \(\lambda(t)\in(0,1)\), the polytope \(\mathcal{E}(t)\) in \eqref{eset} is $\lambda(t)$-contractive if $\gamma^j(t) \le \lambda(t) {h_e^j}(t)$, where $\gamma^j(t)=\max_{e(t)} \, H^{j,:}_{e} e(t+1)$ for $j=1,...,q$, whenever $H_e e(t) \le h_e(t)$. Using \eqref{error-zonotope}, define
\begin{align*} %\label{bargam}
 & \bar \gamma^j(t)  = \max\limits_{e(t)}  \Big( {H^{j,:}_e} (X_1-C_{dw}) V_K(t) \, e(t)+{H^{j,:}_e} c_h+ \nonumber \\ &   \Big|\sum\limits_{i=1}^{s_c}   \beta^i {H^{j,:}_e} G^{:,i}_{dw} V_K(t) e(t) \Big| + \big|\sum\limits_{i=1}^{s_c} \beta^i{H^{j,:}_e} {G^{:,i}_{dw}} D_0^\dagger\begin{bmatrix} \bar{x}(t) \\ \bar{u}(t) \end{bmatrix}\big| +\nonumber \\& \sum\limits_{i=1}^{s_w} \big| {H^{j,:}_e}  {G^{:,i}_h} \big| \Big), \nonumber \\  &
 {\rm{s}}{\rm{.t}}{\rm{.}} \,\, \sum\limits_{i=1}^{2s_c+s_w} {A}^{:,i}_{C} \zeta^i={B}_{C},  \quad \big\Vert \zeta \big\Vert \leq 1, \nonumber \\  & \quad \quad H_e e(t) \le h_e(t),
\end{align*}
where $s_c = Ts_w+s_\theta$, and the last three terms of the cost function are obtained using the fact that for $\zeta=[\beta^{\top} \,\,\, \beta^{\top} \,\,\, \eta^{\top}]^\top \in \mathbb{R}^{2s_c+s_w}$, one has
\begin{align} \label{ineql}
   &  {H^{j,:}_e} \! \begin{bmatrix} -G_{dw} \circ (V_K(t)\,e(t)) & -G_{dw} \circ (D_0^\dagger \begin{bmatrix} \bar{x}(t) \\ \bar{u}(t) \end{bmatrix}) & G_h\end{bmatrix}\! \zeta  \!\le\!  \nonumber \\ & 
     \Big|\!\sum\limits_{i=1}^{s_c} \beta^i {H^{j,:}_e} G^{:,i}_{dw} V_K(t) e(t) \!\Big|   \!+\! \big|\!\sum\limits_{i=1}^{s_c} \beta^i {H^{j,:}_e} {G^{:,i}_{dw}} D_0^\dagger\begin{bmatrix} \bar{x}(t) \\ \bar{u}(t) \end{bmatrix}\!\big| \nonumber \\&+ \sum\limits_{i=1}^{s_w} \big| {H^{j,:}_e}  {G^{:,i}_h} \big|.
\end{align}

Based on \eqref{ineql}, we have \( \gamma^j(t) \le \bar{\gamma}^j(t) \), and thus, the polytope \(\mathcal{E}(t)\) in \eqref{eset} is $\lambda(t)$-contractive if \( \bar{\gamma}^j(t) \le \lambda(t) h^j_{e}(t) \) for \( j = 1, \dots, q \). To bound the term involving $\beta^i$, we exploit that $H_e e(t) \le h_e(t)$
implies $e(t)$ lies in a polytope, which admits a multidimensional interval
enclosure $e(t) \in [\underline{e}(t), \overline{e}(t)]$ with
$\underline{e}(t), \overline{e}(t) \in \mathbb{R}^n$ and
$\underline{e}(t) \le \overline{e}(t)$~\cite{Althoff}. From this, a bound on the norm of \( e(t) \) can be derived as \( \|e(t)\| \le M_e(t) \), where \( M_e(t) \) is a function of \( h_e(t) \) and \( H_e \), and hence a variable in the optimization. Therefore, the term involving \( \beta \) can be bounded as
\begin{align*}
&\left| \sum_{i=1}^{s_c} \beta^i {H^{j,:}_e} G^{:,i}_{dw} V_K(t) e(t) \right|
\le\\ &\quad \quad \quad \quad \quad \left\| \sum_{i=1}^{s_c} \beta^i {H^{j,:}_e} G^{:,i}_{dw} \right\| \|V_K(t)\| \|e(t)\|,
\end{align*}
where \( \|V_K(t)\| \le \rho(t) \) is enforced. Since \( M_e(t)\) depends on \( h_e(t)\), this product is bilinear in \( \rho(t) \) and \( h_e(t) \), hence nonconvex. To avoid bilinearity, we use \( e(t) = x(t) - \bar{x}(t) \) and initialize the error tube as \(\mathcal{E}(0):=\mathcal{X}\). Given the bound \(\|x(t)\|\le M_x\), it follows that \(\|e(0)\|\le M_e(0) = M_x\). This bound is used only to initialize the tube--gain update problem~\eqref{LPf}, where we optimize over the feedback gain $V_K$ and the scalar contraction factor $\lambda$ using this error bound; this initial tube is not passed to the TMPC optimization. In~\eqref{LPf}, the tube is refined by explicitly accounting for the closed-loop model sets and the open-loop mismatch in the error dynamics: the updated value of $\lambda$ determines a new tube scaling $h_e$ via~\eqref{hupdate}, and the corresponding error bound $M_e$ is updated accordingly. The refined error tube returned by~\eqref{LPf} is then used in the TMPC problem~\eqref{TMPC}. After solving the TMPC problem \eqref{TMPC} and computing the nominal trajectory \( \bar{x}(t) \) and \( \bar{u}(t) \), we pass this nominal solution to the tube--gain update problem~\eqref{LPf} to re-optimize the feedback gain $V_K(t)$ and the scalar contraction factor $\lambda(t)$, and to accordingly update the tube scaling $h_e(t)$ and the bound $M_e(t)$. Repeating this at each step removes bilinearity and progressively tightens the error bound.
%This allows us to fix a conservative error bound for the first optimization. After solving the TMPC problem \eqref{TMPC} and computing the nominal trajectory \( \bar{x}(t) \) and \( \bar{u}(t) \), we optimize over the feedback gain $V_K(t)$ and the scalar contraction factor $\lambda(t)$ using this error bound. At each subsequent time step, the updated value of $\lambda$ yields a new tube scaling $h_e$ through the constraint \eqref{hupdate}, and we then update the
%corresponding error bound $M_e$ accordingly. This approach eliminates bilinear terms in the optimization and enables progressively tighter error bounds.

Thus, the \( \bar{\gamma}^j(t) \) can be upper-bounded as $\bar \gamma^j(t) \le \hat \gamma^j(t)$, where 
\begin{align*} %\label{bargam}
 & \hat \gamma^j(t)  = \max\limits_{e(t)}  \Big( {H^{j,:}_e} (X_1-C_{dw}) V_K(t) \, e(t)+{H^{j,:}_e} c_h+\\& \rho(t)\, l^j(t)+z^j(t)+y^j \Big), \nonumber \\  &
 {\rm{s}}{\rm{.t}}{\rm{.}} \,\, \quad H_e e(t) \le h_e(t),
\end{align*}
where $y^j$, $l^j(t)$, and $z^j(t)$ are defined in \eqref{yj}, \eqref{lj}, and \eqref{zj}, respectively, and represent the contribution of the disturbance zonotope, the effect of the
closed-loop model uncertainty, and the contribution of the open-loop
model mismatch as a function of $(\bar{x}(t),\bar{u}(t))$. Using duality, $\lambda(t)$-contactivity is satisfied if $\tilde \gamma^j(t) \le \lambda(t) {h_e^j}(t)$ where 
\begin{align*} 
&\tilde \gamma^j(t)  \!=\! \min\limits_{\alpha^j(t)} {\alpha^j(t)}\!^{\top}\! h_e(t)\!+\!  {H^{j,:}_e} c_h\!+\!  l^j(t)  \big\Vert {V_K(t)} \big\Vert\!+\! z^j(t)\!+\! y^j, \label{a1} \nonumber \\&
{\alpha^j(t)}^{\top} H_e =  {H^{j,:}_e} (X_1-C_{dw})  V_K(t),  \nonumber \\&
{\alpha^j(t)}^{\top} \ge \mathbf{0}_{1 \times q},  
\end{align*}
and $\alpha^j(t) \in \mathbb{R}^q$ is the dual variable associated with the 
$j$-th facet constraint in $H_e e(t) \le h_e(t)$. Collecting these dual variables 
for all facets $j=1,\dots,q$ into the matrix ${P(t)}=[\alpha^1(t),....,\alpha^q(t)]^{\top} \in \mathbb{R}^{q \times q}$, yields an elementwise nonnegative matrix $P(t)$, since each $\alpha^j(t)$ is 
nonnegative for all $j=1,\dots,q$. Stacking the facetwise conditions then yields the matrix inequalities
\begin{align*}
&P(t)\,h_e(t) \;\le\; \lambda(t)\,h_e(t) - H_e c_h - \rho(t)\,l(t) - z(t) - y, \\
&P(t)\,H_e = H_e \bigl(X_1 - C_{dw}\bigr)V_K(t),\\
&P(t) \ge \mathbf{0}_{q \times q},
\end{align*}
where $y=[y^1,\dots,y^q]^{\top}$, $l(t)=[l^1(t),\dots,l^q(t)]^{\top}$, $z(t)=[z^1(t),\dots,z^q(t)]^{\top}$, and $\rho(t)$ is an upper bound on $\|V_K(t)\|$. Hence, by duality, $\lambda(t)$-contractivity holds if there exist $P(t), V_K(t), \rho(t), \lambda(t)$ satisfying the constraints in~\eqref{LPf}.
\end{proof}

\begin{remark}
The optimizations in~\eqref{lj} and~\eqref{zj} are nonconvex due to the maximization operator. 
This issue can be resolved by expressing $l^j(t)$ and $z^j(t)$ as support–function evaluations of a time-invariant polytope. In particular, recall that $A_{dw}$ and $B_{dw}$ are the stacked 
equality constraints defining the refined disturbance set in~\eqref{dw}. Using these equalities, 
define
\begin{align*}
%\mathcal{P}_l = \bigl\{\beta \in \mathbb{R}^{s_c} :
%A_{dw}\beta = B_{dw},\, \|\beta\| \le 1\bigr\},\\
\mathcal{P}_{dw} = \bigl\{\beta \in \mathbb{R}^{s_c} :
A_{dw}\beta = B_{dw},\, \|\beta\| \le 1\bigr\},
%\mathcal{P}_z = \bigl\{\mu \in \mathbb{R}^{s_c} :
%A_{dw}\mu = B_{dw},\, \|\mu\| \le 1\bigr\},
\end{align*}
where $s_c = Ts_w+s_\theta$.

For each facet $j\in\{1,\dots,q\}$, the corresponding direction vectors are
\begin{align*}
a^j_l(t)
&=
M_e(t)\,
\bigl(
H^{j,:}_e G_{dw}^{:,1},\dots,
H^{j,:}_e G_{dw}^{:,s_c}
\bigr)^\top,\\[2pt]
a^j_z(t)
&=
\bigl(
H^{j,:}_e G_{dw}^{:,1} v(t),\dots,
H^{j,:}_e G_{dw}^{:,s_c} v(t)
\bigr)^\top,
\end{align*}
where $ v(t) = D_0^\dagger
\begin{bmatrix}
\bar{x}(t)^\top & \bar{u}(t)^\top
\end{bmatrix}^\top$. The quantities $l^j(t)$ and $z^j(t)$ can then be written as
\begin{align*}
l^j(t)=\sigma_{\mathcal{P}_{dw}}\bigl(a^j_l(t)\bigr)
=\max_{\beta\in\mathcal{P}_{dw}}\,\beta^\top a^j_l(t),\\
z^j(t)=\sigma_{\mathcal{P}_{dw}}\bigl(a^j_z(t)\bigr)
=\max_{\beta\in\mathcal{P}_{dw}}\,\beta^\top a^j_z(t).
\end{align*}

Here, the polytope $\mathcal{P}_{dw}$ is time-invariant, since its geometry depends only on the fixed matrices $A_{dw}$ and $B_{dw}$. Its vertex set can therefore be 
computed offline (e.g., via vertex enumeration or any equivalent polytope representation). In online phase, evaluating $l^j(t)$ and $z^j(t)$ reduces to maximizing finitely many inner products, 
$r^\top a^j_l(t)$ and $r^\top a^j_z(t)$, over the stored vertices $r\in\mathrm{vert}(\mathcal{P}_{dw})$, respectively. This removes the nonconvex maximization and ensures that $l^j(t)$ and $z^j(t)$ can be computed efficiently at every time step.
%Here, the polytopes $\mathcal{P}_l$ and $\mathcal{P}_z$ are time-invariant, since their geometry depends only on the fixed matrices $A_{dw}$ and $B_{dw}$. Their vertex sets can therefore be computed offline (e.g., via vertex enumeration or any equivalent polytope representation). In online phase, evaluating $z^j(t)$ and $l^j(t)$ reduces to maximizing finitely many inner products, $r_z^\top a^j_z(t)$ and $r_l^\top a^j_l(t)$, over the stored vertices $r_z\in\mathrm{vert}(\mathcal{P}_z)$ and $r_l\in\mathrm{vert}(\mathcal{P}_l)$, respectively. This removes the nonconvex maximization from the main optimization and ensures that $l^j(t)$ and $z^j(t)$ can be computed efficiently at every time step.
\end{remark}

%To minimize the volume of the error tube, we also optimize the contraction factor \(\lambda\) in the \(\lambda\)-contractivity condition so that the next error bound satisfies \(h_e(t+1)=\lambda\,h_e(t)\).
\begin{remark}
Under the assumptions of Theorem~\ref{thm-LPf}, suppose that at each time $t$ the optimization problem~\eqref{LPf} is feasible, returns $\lambda(t)\in(0,1)$ and $V_K(t)$, and updates $h_e(t)$ according to~\eqref{hupdate}. By parametrization, the rule $h_e(t{+}1)=\lambda(t)h_e(t)$ with $0<\lambda(t)<1$
implies $\mathcal{E}(t{+}1)=\lambda(t)\,\mathcal{E}(t)\subseteq \mathcal{E}(t),$ so the error tube is nested and monotonically shrinking. The constraints in~\eqref{LPf} further ensure that $\mathcal{E}(t)$ is $\lambda(t)$-contractive in the sense of Definition~\ref{contract}. Hence, the tube parametrization and the contractivity conditions together yield a robust, shrinking error tube. Finally, by jointly optimizing the contraction factor $\lambda(t)$ and the ancillary feedback $V_K(t)$ in~\eqref{LPf}, the TMPC formulation~\eqref{TMPC}
adapts $V_K(t)$ and $\mathcal E(t)$ online, thereby reducing conservatism, enlarging the feasible set of the nominal system, and improving closed-loop performance.
\end{remark} 

\begin{corollary}
\label{lem:CL-dominates-directional}
Consider the tube $\mathcal{E}(t) = \mathcal{P}\bigl(H_e,h_e(t)\bigr)$ in~\eqref{eset}, together with the error dynamics corresponding to (i) the open-loop model obtained from~\eqref{error-dyn2} by allowing the feedback gain to be time-varying and designing $K(t)$ \textup{(}parameterized via $V_K(t)$\textup{)} by~\eqref{LPf}, and (ii) the proposed closed-loop model with adaptive gain in~\eqref{error-zonotope}. For each facet $j\in\{1,\dots,q\}$, using ~\eqref{error-dyn2}, we define 
\[
\Gamma_{{ol}}(t) :=  H_e^{j,:}c_h + F^j + y^j,
\]
where $y^j$ defined in \eqref{yj}, and 
\begin{align}\label{Rj}
F^j 
:= \max_{\substack{x\in\mathcal{X}, u\in\mathcal{U}}}
\bigl| {d^j} [\,x^\top \; u^\top\,]^\top\bigr|,
\end{align}
with
\begin{align}\label{dj}
{d^j} := H_e^{j,:} \begin{bmatrix} \Delta A & \Delta B \end{bmatrix}
\in \mathbb{R}^{1\times(n+m)}.
\end{align}
Furthermore, using ~\eqref{error-zonotope}, we define 
\begin{align*} %\label{bound-lemma2}
\Gamma^j_{{cl}}(t)
 := H^{j,:}_e c_h 
   +  \rho(t)\,l^j(t)
   + z^j(t) 
   + y^j,
\end{align*}
with $y^j$, $l^j(t)$, and $z^j(t)$ given by~\eqref{yj}--\eqref{zj}.
Suppose that, at time $t$, the facet-wise condition
\begin{equation}\label{eq:facet-condition}
\Gamma^j_{{cl}}(t) < \Gamma^j_{{ol}}(t),
\quad \forall\, j \in \{1,\dots,q\},
\end{equation}
holds. Then, the feasible region of~\eqref{LPf} associated with the proposed closed-loop error dynamics strictly contains the feasible 
region associated with the open-loop error dynamics. In particular, the minimal $\lambda$-contractive tube for the closed-loop error dynamics~\eqref{error-zonotope} is strictly smaller (in the Minkowski sense) than the minimal $\lambda$-contractive tube designed from the open-loop error dynamics~\eqref{error-dyn2}. Moreover, for sufficiently large state and input bounds, or for disturbances and model uncertainty that are not excessively large relative to these bounds, \eqref{eq:facet-condition} holds.

\end{corollary}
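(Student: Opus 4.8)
The plan is to reduce the whole statement to a single per-facet comparison: I will show that the open-loop design built from~\eqref{error-dyn2} and the proposed closed-loop design~\eqref{error-zonotope} generate the \emph{same} optimization~\eqref{LPf} except for one offset in the contractivity constraint~\eqref{21b}, and that this offset is strictly smaller in the closed-loop case. The first step is to verify this shared structure. The crucial observation is that the nominal closed-loop propagation matrix is identical in both formulations: since $[\bar A\ \bar B]=(X_1-C_{dw})D_0^\dagger$ and the minimum-norm gain parametrization satisfies $D_0 V_K(t)=[\,\mathbf I_n;\,K(t)\,]$, one has $(X_1-C_{dw})V_K(t)=\bar A+\bar B K(t)$. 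Consequently the equality constraint $P(t)H_e=H_e(X_1-C_{dw})V_K(t)$ and the left-hand term $P(t)h_e(t)$ of~\eqref{21b} coincide for both dynamics; the only difference is that the uncertainty offset $\rho(t)\,l^j(t)+z^j(t)$ (acting on $e$ and on the \emph{nominal} pair $(\bar x(t),\bar u(t))$) is replaced by the full-set term $F^j$ of~\eqref{Rj}. In other words, the right-hand side of~\eqref{21b} carries $\Gamma^j_{cl}(t)$ in one case and $\Gamma^j_{ol}(t)$ in the other, and all remaining constraints of~\eqref{LPf} are common.

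Next I would establish the feasible-region containment. Fix any tuple $(P,V_K,\rho,\lambda)$ feasible for the open-loop problem, so $(P h_e)^j \le \lambda h_e^j-\Gamma^j_{ol}(t)$ for every facet $j$. Evaluating $\Gamma^j_{cl}(t)$ at the same $\rho$ and invoking the hypothesis~\eqref{eq:facet-condition}, we obtain $\lambda h_e^j-\Gamma^j_{cl}(t) > \lambda h_e^j-\Gamma^j_{ol}(t)\ge (P h_e)^j$, so the identical tuple satisfies the closed-loop instance of~\eqref{21b}. Since the other constraints are shared, the open-loop feasible set is contained in the closed-loop one. Strictness follows from the positive slack: at the open-loop optimizer the closed-loop constraint~\eqref{21b} has strictly positive margin $\Gamma^j_{ol}(t)-\Gamma^j_{cl}(t)>0$ at every facet, and this margin can be spent either to decrease $\lambda(t)$ or to shrink $h_e(t)$ while remaining closed-loop feasible, yielding a contractive configuration that the open-loop problem cannot certify.

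I would then convert this into the minimal-tube comparison. For fixed facet matrix $H_e$, the minimal $\lambda$-contractive cross-section is the componentwise-smallest $h_e$ admitting a certificate, and it is the fixed point of the per-facet support-function recursion whose only data-dependent offset is $\Gamma^j$. This recursion is monotone nondecreasing in the offset (and, accounting for the dependence of $l^j(t)$ on $M_e(t)$, hence on $h_e$, monotone in $h_e$ as well). By a monotone fixed-point/comparison argument, the strict offset domination $\Gamma^j_{cl}(t)<\Gamma^j_{ol}(t)$ transfers to a componentwise-smaller minimal $h_e$, strict at the facets where the gap is active. Because both tubes share $H_e$ and differ only by a componentwise-smaller scaling, this gives $\mathcal{P}(H_e,h^{\star}_{e,cl})\subsetneq\mathcal{P}(H_e,h^{\star}_{e,ol})$, the claimed strict Minkowski inclusion; the update~\eqref{hupdate} is consistent with this, since the extra slack permits a smaller contraction factor and hence a strictly smaller asymptotic tube.

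Finally, for the sufficiency claim I would bound the two offsets by their dominant scalings. The open-loop term $F^j=\max_{x\in\mathcal X,u\in\mathcal U}\lvert d^j[x^\top\,u^\top]^\top\rvert$ with $d^j$ as in~\eqref{dj} is the support function of $\mathcal X\times\mathcal U$ along $\pm d^j$, so it grows linearly with the radius of the admissible sets whenever $d^j\neq 0$. In contrast, $z^j(t)$ is evaluated at the \emph{nominal} pair $(\bar x(t),\bar u(t))$ and $\rho(t)\,l^j(t)$ scales with the bounded error-tube radius $M_e(t)$ and the refined generators $G_{dw}$ of~\eqref{dw}; both stay bounded as $\mathcal X,\mathcal U$ are enlarged. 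Hence for sufficiently large state and input bounds, or equivalently when the disturbance and the refined model uncertainty are not large relative to the set radii, $\Gamma^j_{ol}(t)$ grows unboundedly while $\Gamma^j_{cl}(t)$ remains bounded, so~\eqref{eq:facet-condition} holds at every facet. The step I expect to be the main obstacle is the minimal-tube comparison: rigorously defining the minimal $\lambda$-contractive tube within the fixed-facet parametrization and proving the monotone transfer despite the coupling between the offset $l^j(t)$ and the tube radius $M_e(t)$, so that strict offset domination provably yields strict tube domination. The feasibility-containment step is comparatively direct once the shared-constraint structure of the preceding paragraph is identified.
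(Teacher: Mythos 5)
Your skeleton---facet-wise comparison of the offsets $\Gamma^j_{{ol}}(t)$ and $\Gamma^j_{{cl}}(t)$ entering \eqref{21b}, monotonicity of the contractivity constraints in those offsets, and a support-function growth argument for the final claim---is essentially the paper's. But your ``crucial observation,'' on which the entire shared-structure reduction rests, is false. From $[\bar A\ \ \bar B]=(X_1-C_{dw})D_0^\dagger$ and $D_0V_K(t)=[\,\mathbf{I}_n^\top\ \ K(t)^\top]^\top$ you may only conclude
\begin{equation*}
\bar A+\bar B K(t)=(X_1-C_{dw})\,D_0^\dagger D_0\,V_K(t),
\end{equation*}
and $D_0^\dagger D_0\in\mathbb{R}^{T\times T}$ is the orthogonal projector onto the row space of $D_0$, of rank $n+m<T$, not the identity. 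Writing $X_1=\theta^\star D_0+W_0$, using $D_0(\mathbf{I}_T-D_0^\dagger D_0)=\mathbf{0}$ and $C_{dw}=C_w$, the discrepancy is
\begin{equation*}
(X_1-C_{dw})V_K(t)-\bigl(\bar A+\bar B K(t)\bigr)=(W_0-C_w)\,(\mathbf{I}_T-D_0^\dagger D_0)\,V_K(t),
\end{equation*}
i.e., the kernel-space component of the realized disturbance pushed through $V_K(t)$, which is generically nonzero; it vanishes only if the columns of $V_K(t)$ lie in the row space of $D_0$, a restriction that \eqref{LPf} does not impose. Consequently the equality constraints of the two instances of \eqref{LPf}---$P(t)H_e=H_e(\bar A+\bar B U_0V_K(t))$ for the open-loop design versus $P(t)H_e=H_e(X_1-C_{dw})V_K(t)$ for the closed-loop one---do not coincide, and your containment step (``the identical tuple satisfies the closed-loop instance'') breaks: the same $P(t)$ need not satisfy the closed-loop equality constraint.

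For comparison, the paper never asserts your identity: its proof implicitly treats both designs as the same program \eqref{LPf} whose facet constraints differ only in the scalar right-hand-side offsets, and compares those offsets directly, so the structural commonality is a modeling convention there rather than a derived fact. To repair your argument, either adopt that convention explicitly (define the open-loop variant of \eqref{LPf} to retain all constraints and replace $\rho(t)l(t)+z(t)$ by $F=[F^1,\dots,F^q]^\top$), or restrict $V_K(t)$ to the row space of $D_0$, e.g. $V_K(t)=D_0^\dagger[\,\mathbf{I}_n^\top\ \ K(t)^\top]^\top$, under which your identity does hold but \eqref{LPf} acquires an extra constraint. Your subsequent steps then track and even sharpen the paper: the positive-slack strictness argument and the monotone fixed-point sketch for the minimal tube go beyond the paper's one-line conclusion, and you rightly flag the $l^j$--$M_e$--$h_e$ coupling as the unresolved difficulty there. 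On the final claim, however, you overstate: saying $\rho(t)l^j(t)$ ``stays bounded as $\mathcal{X},\mathcal{U}$ are enlarged'' ignores that $\mathcal{E}(0)=\mathcal{X}$, so $M_e(t)$---and hence $l^j(t)$---scales with the state-set radius at least initially; and ``holds at every facet'' requires $d^j\neq 0$ for all $j$, since a facet with $d^j=0$ but $\rho(t)l^j(t)+z^j(t)>0$ violates \eqref{eq:facet-condition} no matter how large the sets are. The paper hedges both points (mismatch directions with $d^j\neq0$, nominal trajectories in the interior, shrinking $M_e(t)$); your version should too.
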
 
\begin{proof}
For the open-loop error dynamics~\eqref{error-dyn2}, the vector $d^j$ in~\eqref{dj} encodes the model-mismatch direction in facet $j$, so $F^j$ in~\eqref{Rj} represents the worst-case contribution of this mismatch over $\mathcal{X}\times\mathcal{U}$ in direction $H_e^{j,:}$. Consequently, $\Gamma^j_{{ol}}(t)$ upper-bounds the combined effect of open-loop model mismatch and disturbance in facet $j$, and the $\lambda(t)$-contractive tube-update constraint in~\eqref{21b} associated with~\eqref{error-dyn2} can be written, for each $j$, as $P^{j,:}(t)\,h_e(t)
\;\le\; \lambda(t)\,h_e^j(t) - \Gamma^j_{{ol}}(t)$. 

For the closed-loop error dynamics~\eqref{error-zonotope}, the derivation of Theorem~1 shows that, in facet $j$, all contributions of disturbance and model uncertainty, including the closed-loop representation error, are bounded by $\Gamma^j_{{cl}}(t)$, which depends on the current nominal pair $(\bar x(t),\bar u(t))$ rather than on a maximization over $\mathcal{X}\times\mathcal{U}$. If condition~\eqref{eq:facet-condition} holds, then in each facet the closed-loop bound entering the tube-update constraint in~\eqref{21b} is strictly smaller than the corresponding open-loop bound. Since the facetwise contractivity constraints in~\eqref{LPf} are monotone in these bounds, the feasible region of~\eqref{LPf} under the closed-loop error model strictly contains that of the open-loop case, so the optimization can choose a smaller and thus less conservative contraction factor $\lambda(t)$, yielding a tighter $\lambda(t)$-contractive RPI error tube and proving the first claim.

%If condition~\eqref{eq:facet-condition} holds, then in each facet the closed-loop bound entering the tube-update constraints is strictly smaller than the corresponding open-loop bound. Since the facetwise contractivity constraints in~\eqref{LPf} are monotone in these bounds, the smallest $\lambda$-contractive tube compatible with $\Gamma^j_{{cl}}(t)$ is strictly contained (in the Minkowski sense) in the smallest $\lambda$-contractive tube compatible with $\Gamma^j_{{ol}}(t)$, proving the first claim.

For the second claim, note that \eqref{eq:facet-condition} compares, in
each facet direction $j$: 1) The open-loop bound $\Gamma^j_{{ol}}(t)$, which contains $F^j$,
the support of $\mathcal{X}\times\mathcal{U}$ in direction $d^j$, and
thus grows as $\mathcal{X}$ and $\mathcal{U}$ are enlarged along
directions with $d^j\neq 0$. 2) The closed-loop bound $\Gamma^j_{{cl}}(t)$, which depends only on $(\bar x(t),\bar u(t))$,
the refined sets $\mathcal{M}^c_{{ol}}$ and
$\mathcal{M}^c_{{cl}}(t)$, and the tube size via $h_e(t)$ and
$M_e(t)$, and does not involve a maximization over
$\mathcal{X}\times\mathcal{U}$. In practice,
$\Gamma^j_{{cl}}(t)$ remains moderate because (i) the nominal
trajectory stays in the interior of $\mathcal{X}\times\mathcal{U}$, (ii) the closed-loop model set $\mathcal{M}^c_{{cl}}(t)$ shrinks as more data are collected and $V_K(t)$ is updated, and (iii) the tube size $h_e(t)$ and the bound
$M_e(t)$ decrease as the contractive tube is updated, further reducing
$\rho(t)\,l^j(t)$ and $z^j(t)$. Thus, for sufficiently large but physically meaningful bounds
$\mathcal{X},\mathcal{U}$ along the mismatch directions, and for
disturbances and model uncertainty of comparable size, it is reasonable
to expect that condition~\eqref{eq:facet-condition} holds.
\end{proof}

\subsection{Recursive Feasibility $\&$ Stability Analysis}
This subsection analyzes the recursive feasibility and closed-loop stability of the presented
elastic TMPC scheme. In particular, we show that the alternating structure between the TMPC 
problem and the tube–gain update ensures that: (i) the error tube remains $\lambda(t)$–contractive at 
every step, (ii) the tightened constraints remain feasible at all future times, and (iii) the nominal closed loop admits a Lyapunov decrease, and (iv) a polyhedral Lyapunov candidate for the error tube guarantees exponential stability of the closed-loop error dynamics. These results formalize how the learned gain $V_K(t)$, the refined model sets, and the shrinking tube interact to guarantee robust constraint satisfaction and stable closed-loop behavior.

\begin{theorem}\label{recfeas}
Consider the uncertain system~\eqref{system} with error dynamics~\eqref{error-zonotope}
under Assumptions~1--8, and let the error tube $\mathcal{E}(t)$ be defined as
in~\eqref{eset}. At each time $t$, the controller operates in two steps:
\begin{enumerate}[label=(\roman*)]
  \item Solve the tube–gain update problem~\eqref{LPf} to obtain
        $\mathcal{E}(t)$ and $V_K(t)$.
  \item Solve the TMPC problem~\eqref{TMPC} with the updated tube
        $\mathcal{E}(t)$ and gain $V_K(t)$, and apply to the plant the control input \eqref{optu}.
        %\begin{align*}
        %  u(t) = \bar{u}^\star(t\mid t) + U_0 V_K(t)\,x(t),
        %\end{align*}
        %where $\bar{u}^\star(\cdot\mid t)$ denotes the optimal input sequence
        %of~\eqref{TMPC} at time $t$.
\end{enumerate}
Suppose that the tube–gain update problem~\eqref{LPf} is feasible for all
$t \ge t_0$, and that the TMPC problem~\eqref{TMPC} is feasible at some
time $t_0$. Then the TMPC problem~\eqref{TMPC} remains feasible for all
subsequent times $t \ge t_0$; in particular, feasibility at time $t$ implies
feasibility at time $t{+}1$.
\end{theorem}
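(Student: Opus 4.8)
The plan is to prove recursive feasibility by induction on $t$, using the classical shift-and-extend construction adapted to the time-varying tube $\mathcal{E}(t)$ and ancillary gain $V_K(t)$. The induction carries a joint invariant: (a) the TMPC problem~\eqref{TMPC} is feasible at time $t$, and (b) the realized error satisfies $e(t)\in\mathcal{E}(t)$. The base case at $t_0$ follows from the tube initialization in Assumption~\ref{init}, which gives $e(t_0)=x(t_0)-\bar{x}(t_0|t_0)\in\mathcal{E}(t_0)$. For the inductive step, I would first propagate~(b): since~\eqref{LPf} is feasible at $t$, Theorem~\ref{thm-LPf} certifies that $\mathcal{E}(t)$ is $\lambda(t)$-contractive for the error dynamics~\eqref{error-zonotope}, so $e(t)\in\mathcal{E}(t)$ implies $e(t+1)\in\lambda(t)\mathcal{E}(t)=\mathcal{E}(t+1)$ by~\eqref{hupdate} and Definition~\ref{contract}. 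Choosing the nominal initial state at $t+1$ as the predicted value $\bar{x}(t+1|t+1):=\bar{x}^{\star}(t+1|t)$ then yields $x(t+1)\in\bar{x}(t+1|t+1)\oplus\mathcal{E}(t+1)$, re-establishing~(b).

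Next I would build the candidate nominal input sequence at $t+1$ by shifting the optimizer from $t$ and appending the terminal action, i.e. $\bar{u}(t+1+k|t+1):=\bar{u}^{\star}(t+1+k|t)$ for $k=0,\dots,N-2$ and $\bar{u}(t+N|t+1):=K_{\mathcal{T}}\bigl(\bar{x}^{\star}(t+N|t)\bigr)$. Because the nominal dynamics use the fixed pair $(\bar{A},\bar{B})$, the induced nominal states are exactly the shifted states $\bar{x}(t+1+k|t+1)=\bar{x}^{\star}(t+1+k|t)$, and the new terminal state is $\bar{x}(t+1+N|t+1)=\bar{A}\,\bar{x}^{\star}(t+N|t)+\bar{B}\,K_{\mathcal{T}}\bigl(\bar{x}^{\star}(t+N|t)\bigr)$. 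Since $\bar{x}^{\star}(t+N|t)\in\mathcal{T}$ by feasibility at $t$, Assumption~\ref{terminal}(ii) returns this state to $\mathcal{T}$, so the terminal constraint~\eqref{TMPCT} holds for the candidate.

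It remains to verify the tightened state and input constraints~\eqref{TMPCX}--\eqref{TMPCU}. The state side is immediate: because $\lambda(t)\in(0,1)$, the update~\eqref{hupdate} gives the nested inclusion $\mathcal{E}(t+1)=\lambda(t)\mathcal{E}(t)\subseteq\mathcal{E}(t)$, hence $\mathcal{X}\ominus\mathcal{E}(t+1)\supseteq\mathcal{X}\ominus\mathcal{E}(t)$, and the shifted states, which met~\eqref{TMPCX} at $t$, still meet the looser tightened constraint at $t+1$ (the terminal state uses Assumption~\ref{terminal}(i)). The step I expect to be the main obstacle is the input constraint~\eqref{TMPCU}: the shifted inputs were admissible under $\bar{u}^{\star}(t+1+k|t)\oplus U_0V_K(t)\mathcal{E}(t)\subseteq\mathcal{U}$, but at $t+1$ they must satisfy the same inclusion with the \emph{re-optimized} pair $\bigl(V_K(t+1),\mathcal{E}(t+1)\bigr)$. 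Unlike the state tightening, the correction set $U_0V_K(t+1)\mathcal{E}(t+1)$ depends on the updated gain, so tube shrinkage alone does not force its containment in $U_0V_K(t)\mathcal{E}(t)$.

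To close this gap I would establish the inclusion $U_0V_K(t+1)\mathcal{E}(t+1)\subseteq U_0V_K(t)\mathcal{E}(t)$, which is precisely what keeps the shifted inputs admissible. Using $\mathcal{E}(t+1)=\lambda(t)\mathcal{E}(t)$, this reduces to comparing $\lambda(t)\,U_0V_K(t+1)\mathcal{E}(t)$ with $U_0V_K(t)\mathcal{E}(t)$; bounding the former through $\|V_K(t+1)\|\le\rho(t+1)$ and the interval enclosure of $\mathcal{E}(t)$ employed in Theorem~\ref{thm-LPf}, the contraction $\lambda(t)<1$ offsets the gain update whenever $\lambda(t)\,\rho(t+1)\le\rho(t)$. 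This monotonicity is natural, since a smaller tube requires less corrective authority, and, if needed, can be enforced as an explicit constraint in~\eqref{LPf} without affecting $\lambda(t)$-contractivity. Once this inclusion holds, the shifted inputs satisfy~\eqref{TMPCU} and the terminal input is admissible by Assumption~\ref{terminal}(i); all constraint families of~\eqref{TMPC} are then met by the candidate, establishing feasibility at $t+1$ and completing the induction for all $t\ge t_0$.
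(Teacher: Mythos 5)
Your proposal follows essentially the same route as the paper's proof: the shift-and-extend candidate sequence, the nested-tube argument $\mathcal{E}(t{+}1)=\lambda(t)\mathcal{E}(t)\subseteq\mathcal{E}(t)$ together with monotonicity of the Minkowski sum for the tightened state constraints~\eqref{TMPCX}, and Assumption~\ref{terminal} for the terminal state and terminal input. You also correctly isolate the one genuinely delicate step — admissibility of the shifted inputs in~\eqref{TMPCU} under the \emph{re-optimized} pair $\bigl(V_K(t{+}1),\mathcal{E}(t{+}1)\bigr)$ — which the paper closes by asserting that Theorem~\ref{thm-LPf} yields $U_0V_K(t{+}1)\,\mathcal{E}(t{+}1)\subseteq U_0V_K(t)\,\mathcal{E}(t)$.

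However, your proposed way of closing that step has a genuine gap. The condition $\lambda(t)\,\rho(t{+}1)\le\rho(t)$, combined with $\|V_K(t{+}1)\|\le\rho(t{+}1)$ and the interval enclosure of $\mathcal{E}(t)$, only produces an \emph{upper} bound on the size of the new correction set $U_0V_K(t{+}1)\mathcal{E}(t{+}1)$; it provides no \emph{lower} bound on the old set $U_0V_K(t)\mathcal{E}(t)$, and therefore cannot certify the set inclusion. Containment between images of the same polytope under two different linear maps is a directional statement: if $U_0V_K(t)\mathcal{E}(t)$ is thin or lower-dimensional in some direction of $\mathbb{R}^m$ while $U_0V_K(t{+}1)\mathcal{E}(t{+}1)$ has a nonzero component along that direction, the inclusion fails no matter how small $\lambda(t)\rho(t{+}1)$ is. So the norm-based offset argument does not establish~\eqref{TMPCU} for the candidate, and the induction does not close as written. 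To repair it, you would need to impose the actual set containment $U_0V_K(t{+}1)\mathcal{E}(t{+}1)\subseteq U_0V_K(t)\mathcal{E}(t)$ as an explicit constraint in the tube–gain update~\eqref{LPf} (polytope-in-polytope containment is LP-representable via duality, in the same spirit as the matrix $P(t)$ encoding of contractivity), or else keep the time-$t$ correction set as the input tightening when verifying the shifted candidate at $t{+}1$.
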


\begin{proof}
Fix an arbitrary $t \ge t_0$. By assumption, the tube--gain update problem \eqref{LPf} is feasible at time $t$. Together with Assumptions~1--8, Theorem~\ref{thm-LPf} implies that the updated error tube $\mathcal{E}(t)$ is $\lambda(t)$-contractive and RPI for the error
dynamics, and that the next tube satisfies $\mathcal{E}(t{+}1) \subseteq \mathcal{E}(t)$. By feasibility of \eqref{TMPC} at time $t$, there exist nominal trajectories $\{\bar{x}^\star(t+k\mid t)\}_{k=0}^{N-1}$ and $\{\bar{u}^\star(t+k\mid t)\}_{k=0}^{N-1}$ satisfying the tightened state and input constraints and the terminal constraint $\bar{x}^\star(t{+}N\mid t) \in \mathcal{T}$.

To prove feasibility of \eqref{TMPC} at time $t{+}1$, we construct the standard
shifted candidate:
\begin{align*}
\tilde{x}(t{+}k\mid t{+}1) &:= \bar{x}^\star(t{+}1{+}k\mid t),\quad\,  k = 0,\dots,N,\\
\tilde{u}(t{+}k\mid t{+}1) &:= \bar{u}^\star(t{+}1{+}k\mid t),\quad\,  k = 0,\dots,N{-}2,\\
\tilde{u}(t{+}N{-}1\mid t{+}1) &:= K_{\mathcal{T}}\big(\bar{x}^\star(t{+}N\mid t)\big),
\end{align*}
where $K_\mathcal{T}(\cdot)$ satisfies the tightened input constraints by Assumption~\ref{terminal}.

For the tightened state constraints, feasibility at time $t$ implies
$\bar{x}^\star(t{+}k\mid t) \oplus \mathcal{E}(t) \subseteq \mathcal{X}$ for
$k=0,\dots,N-1$. Since $\mathcal{E}(t{+}1)\subseteq \mathcal{E}(t)$ and the
Minkowski sum is monotone, for $k=0,\dots,N{-}1$ we obtain
\[
\tilde{x}(t{+}k\mid t{+}1)\oplus \mathcal{E}(t{+}1)
\subseteq \bar{x}^\star(t{+}1{+}k\mid t)\oplus \mathcal{E}(t)
\subseteq \mathcal{X}.
\]

Moreover, feasibility at time $t$ gives $\bar{x}^\star(t{+}N\mid t)\in\mathcal{T}$.
By Assumption~\ref{terminal}, $\mathcal{T}$ is positively invariant under
$K_\mathcal{T}(\cdot)$, so
$\tilde{x}(t{+}N\mid t{+}1)\in\mathcal{T}$,
and all tightened state constraints at time $t{+}1$ are satisfied.

For the tightened input constraints, feasibility at time $t$ implies
$\bar{u}^\star(t{+}k\mid t) \oplus U_0 V_K(t)\,\mathcal{E}(t) \subseteq \mathcal{U}$
for all $k = 0,\dots,N{-}1$. By Theorem~\ref{thm-LPf}, the updated tube--gain pair satisfies
$V_K(t{+}1)\,\mathcal{E}(t{+}1) \subseteq V_K(t)\,\mathcal{E}(t)$, so for
$k=0,\dots,N{-}2$ we obtain
\begin{align*}
\tilde{u}(t{+}k\mid t{+}1)\oplus U_0 V_K(t{+}1)\,\mathcal{E}(t{+}1)
&\subseteq\\ \bar{u}^\star(t{+}1{+}k\mid t) \oplus U_0 V_K(t)\,\mathcal{E}(t)
\subseteq \mathcal{U},
\end{align*}
which shows that all tightened input constraints are satisfied at time $t{+}1$. For $k=N{-}1$, we have
$\tilde{u}(t{+}N{-}1\mid t{+}1)=K_\mathcal{T}(\bar{x}^\star(t{+}N\mid t))$, and, by
Assumption~\ref{terminal}, the corresponding input constraint also holds.

Therefore, the shifted pair
$\{\tilde{x}(t{+}k\mid t{+}1),\tilde{u}(t{+}k\mid t{+}1)\}_{k=0}^{N-1}$
is feasible for \eqref{TMPC} at time $t{+}1$. Since $t\ge t_0$ was arbitrary
and \eqref{LPf} is feasible for all $t \ge t_0$, the same argument can be
repeated at each step, proving that feasibility of \eqref{TMPC} at $t_0$
propagates to all $t \ge t_0$.
\end{proof}

\begin{remark}
Consider the TMPC problem~\eqref{TMPC} at the initial time $t_0$ with a conservative
fixed-tube pair $(\hat{\mathcal E}(t_0),\hat V_K(t_0))$ inducing the tightened sets
$\mathcal X \ominus \hat{\mathcal E}(t_0)$ and
$\mathcal U \ominus U_0 \hat V_K(t_0)\hat{\mathcal E}(t_0)$.
If the tube--gain update problem~\eqref{LPf} is feasible at $t_0$, 
Theorem~\ref{thm-LPf} yields a less conservative pair $({\mathcal E}(t_0),V_K(t_0))$, hence $\mathcal X \ominus \hat{\mathcal E}(t_0) \subseteq \mathcal X \ominus {\mathcal E}(t_0),$ and $\mathcal U \ominus U_0 \hat V_K(t_0)\hat{\mathcal E}(t_0) 
\subseteq \mathcal U \ominus U_0 V_K(t_0){\mathcal E}(t_0)$. Thus, the adaptive tube--gain update enlarges the initial nominal feasible region of~\eqref{TMPC} and makes the initial feasibility requirement of recursive feasibility less restrictive and more likely under larger disturbance levels than in fixed-tube designs.
\end{remark}

%\begin{remark}
%Recursive feasibility requires the TMPC problem~\eqref{TMPC} to be feasible at the initial time $t_0$. In the proposed framework, this initial requirement is less restrictive than in fixed-tube designs. The tube--gain update step \eqref{LPf} shrinks $\mathcal{E}(t_0)$ and updates $V_K(t_0)$ accordingly; hence both tightened nominal sets $\mathcal{X}\ominus\mathcal{E}(t_0)$ and $\mathcal{U}\ominus U_0 V_K(t_0)\mathcal{E}(t_0)$ enlarge. This increases the likelihood of feasibility at the first step, even under larger disturbance levels.
%\end{remark}

\begin{proposition}
Let $J(\bar x^\star(t|t))$ denote the optimal cost of \eqref{TMPC} at time $t$. Under
Theorem~\ref{recfeas} and Assumption~\ref{terminal}, the optimal cost satisfies
\[
J(\bar x^\star(t|t+1))\;\le\; J(\bar x^\star(t|t))\;-\;\mathcal{L}\big(\bar x^\star(t|t),\bar u^\star(t|t)\big),
\]
%where $\big(\bar x^\star(\cdot|t),\bar u^\star(\cdot|t)\big)$ is the optimal nominal state--input sequence of \eqref{TMPC} at time $t$.
where $\bigl(\bar x^\star(t\mid t),\bar u^\star(t\mid t)\bigr)$ is the first
state–input pair of the optimal nominal state sequence
$[\bar x^\star(t\mid t),\dots,\bar x^\star(t+N\mid t)]$
and input sequence
$[\bar u^\star(t\mid t),\dots,\bar u^\star(t+N-1\mid t)]$
returned by~\eqref{TMPC} at time $t$, respectively. Consequently,
$J$ is a Lyapunov function for the nominal closed loop and the origin is
asymptotically stable.
\end{proposition}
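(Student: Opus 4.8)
The plan is to establish the value-function decrease by the standard receding-horizon argument, exploiting the feasible shifted candidate already constructed in the proof of Theorem~\ref{recfeas} together with the terminal descent property in Assumption~\ref{terminal}(iii). The key structural observation that makes this routine is that the nominal dynamics $(\bar A,\bar B)$, the stage cost $\mathcal{L}$, and the terminal cost $V_\mathcal{T}$ are all time-invariant: the tube $\mathcal{E}(t)$ and gain $V_K(t)$ enter \eqref{TMPC} only through the tightened constraints, never through the objective. Hence the optimal cost $J$ is a genuine function of the nominal state, and the sole role of the adaptive tube--gain updates is to certify feasibility of the shifted candidate at $t{+}1$.

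First I would invoke Theorem~\ref{recfeas} to assert that the shifted sequence $\tilde x(t{+}k\mid t{+}1)=\bar x^\star(t{+}1{+}k\mid t)$, $\tilde u(t{+}k\mid t{+}1)=\bar u^\star(t{+}1{+}k\mid t)$ for $k\le N{-}2$, and $\tilde u(t{+}N{-}1\mid t{+}1)=K_\mathcal{T}(\bar x^\star(t{+}N\mid t))$, is feasible for \eqref{TMPC} at time $t{+}1$: admissibility of the tightened state and input constraints follows from the nestedness $\mathcal{E}(t{+}1)\subseteq\mathcal{E}(t)$ and $U_0V_K(t{+}1)\mathcal{E}(t{+}1)\subseteq U_0V_K(t)\mathcal{E}(t)$, while terminal invariance (Assumption~\ref{terminal}(ii)) places $\tilde x(t{+}N\mid t{+}1)=\bar A\bar x^\star(t{+}N\mid t)+\bar B K_\mathcal{T}(\bar x^\star(t{+}N\mid t))$ in $\mathcal{T}$. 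By optimality, the optimal cost at $t{+}1$ is then bounded above by the cost of this candidate.

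Next I would evaluate the candidate cost and telescope. The stage terms for $k=0,\dots,N{-}2$ reproduce the stage terms of the time-$t$ optimizer at indices $1,\dots,N{-}1$, so the difference of the two stage-cost sums collapses to $-\mathcal{L}(\bar x^\star(t\mid t),\bar u^\star(t\mid t))$ plus the leftover boundary terms at the terminal stage, namely $\mathcal{L}(\bar x^\star(t{+}N\mid t),K_\mathcal{T}(\bar x^\star(t{+}N\mid t)))+V_\mathcal{T}(\bar A\bar x^\star(t{+}N\mid t)+\bar B K_\mathcal{T}(\cdot))-V_\mathcal{T}(\bar x^\star(t{+}N\mid t))$. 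Applying Assumption~\ref{terminal}(iii) at $\bar x=\bar x^\star(t{+}N\mid t)$ renders this boundary sum nonpositive, which, combined with the suboptimality bound, yields the stated decrease $J(\bar x^\star(t\mid t{+}1))\le J(\bar x^\star(t\mid t))-\mathcal{L}(\bar x^\star(t\mid t),\bar u^\star(t\mid t))$.

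Finally, for asymptotic stability I would verify the Lyapunov conditions on $J$ along the nominal closed loop: positive definiteness and the lower bound $J(\bar x)\ge \bar x^\top Q\bar x\ge \lambda_{\min}(Q)\|\bar x\|^2$ follow from $Q,R\succ0$ and positive definiteness of $V_\mathcal{T}$; a local quadratic upper bound near the origin follows from feasibility of the terminal law inside $\mathcal{T}$ (so $J(\bar x)\le V_\mathcal{T}(\bar x)$ there); and the strict decrease $J(\bar x(t{+}1))-J(\bar x(t))\le-\lambda_{\min}(Q)\|\bar x(t)\|^2$ is exactly the inequality just derived. Standard Lyapunov theory then gives asymptotic stability of the origin. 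The main obstacle is not the telescoping, which is routine, but rather guarding against the possibility that the time-varying tightened constraints invalidate the cost comparison; I expect to dispatch this cleanly by emphasizing that the objective is independent of $(\mathcal{E}(t),V_K(t))$ and that Theorem~\ref{recfeas} already certifies admissibility of the shifted candidate under the updated, shrunken tightening.
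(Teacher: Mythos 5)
Your proof is correct and follows essentially the same route as the paper: the paper's ``proof'' is merely a citation to the standard receding-horizon argument in the MPC literature, and your write-up is precisely that argument---shifted candidate from Theorem~\ref{recfeas}, terminal descent via Assumption~\ref{terminal}(iii), telescoping, and standard Lyapunov conclusions. Your explicit handling of the time-varying tightened constraints (noting the objective is independent of $(\mathcal{E}(t),V_K(t))$ and that the nested tubes only enlarge the nominal feasible set) is a worthwhile detail that the paper leaves implicit.
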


\begin{proof}
The proof follows the approach in \cite{stable,MPCbook1}.%\cite{stable}
\end{proof}

\begin{theorem}
\label{tube-lyap}
Let $V:\mathcal{E}(t)\rightarrow \mathbb{R}$ be defined by
\begin{equation}\label{lyap}
V\bigl(e(t)\bigr)
:= \max_{j \in \{1,\dots,q\}}\frac{H_e^{j,:} e(t)}{h_e^j(t)}.
\end{equation}
Suppose that, over a finite horizon of length $\mathcal N$, the optimization problem~\eqref{LPf} is feasible at each time step and returns a sequence of decision variables $\mathcal{V}_K = \{ V_K(0),\dots,V_K(\mathcal N-1) \}$ with corresponding contraction factors $\Lambda = \{\lambda(0),\dots,\lambda(\mathcal N-1)\},$ with $\lambda(t)\in(0,1),$ and define the worst-case contraction factor as
\begin{equation} \label{lambdabar}
\bar\lambda \;:=\; \max_{t\in\{0,\dots,\mathcal N-1\}} \lambda(t).
\end{equation}
Then $V\bigl(e(t)\bigr)$ is positive definite and serves as a polytopic Lyapunov candidate for the error tube
$\mathcal{E}(t)$ defined in~\eqref{eset}. Moreover, the following properties hold over the horizon $t = 0,\dots,\mathcal N-1$:

\begin{enumerate}[label=(\roman*)]
\item Disturbance-free case. If $w(t)\equiv 0$, then along any
switching sequence generated by~\eqref{LPf} one has
\[
V\bigl(e(t+1)\bigr)
\;\le\;
\bar \lambda\,V\bigl(e(t)\bigr).
\]
Thus, the error decays exponentially over the horizon.
\item Disturbance case. In the presence of bounded disturbances and model uncertainty, there exists a constant $c_{\bar w}>0$ such that
\[
V\bigl(e(t+1)\bigr)
\;\le\;
\bar\lambda\,V\bigl(e(t)\bigr) + c_{\bar w},
\]
Consequently, the error trajectories converge exponentially fast to a bounded positively invariant neighborhood of the origin.
\end{enumerate}
\end{theorem}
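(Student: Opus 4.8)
The plan is to treat $V$ as the (one-sided) Minkowski gauge of the error tube and to read off its decrease directly from the $\lambda(t)$-contractivity already established in Theorem~\ref{thm-LPf}. First I would settle positive definiteness and the norm-equivalence (sandwich) bounds. Since each tube $\mathcal{E}(t)=\mathcal{P}(H_e,h_e(t))$ is a bounded polytope with $h_e(t)>0$ containing the origin in its interior, the rows of $H_e$ positively span $\mathbb{R}^n$; hence for every $e\neq 0$ at least one facet value $H_e^{j,:}e$ is strictly positive, so $V(e)>0$, while $V(0)=0$. As a maximum of linear functionals normalized by the positive entries $h_e^j(t)$, $V$ is sublinear (subadditive and positively homogeneous of degree one), and on the finite horizon $t\in\{0,\dots,\mathcal N-1\}$, where $h_e(t)$ is bounded above by $h_e(0)$ and below by $\big(\textstyle\prod_{s=0}^{\mathcal N-1}\lambda(s)\big)h_e(0)>0$, there exist uniform constants $\alpha_1,\alpha_2>0$ with $\alpha_1\|e\|\le V(e)\le \alpha_2\|e\|$. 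This certifies that $V$ is a legitimate polytopic Lyapunov candidate.

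For the disturbance-free case (i), I would use that the homogeneous part of the error recursion~\eqref{error-zonotope} is $e\mapsto \Phi(t)e$ with $\Phi(t)=A^\star+B^\star K(t)\in\mathcal{M}^c_{cl}(t)$, and that the contractivity constraint~\eqref{21b} of Theorem~\ref{thm-LPf} encodes exactly $H_e\Phi(t)e\le \lambda(t)\,h_e(t)$ for all $e\in\mathcal{E}(t)$. By positive homogeneity, if $V(e(t))=\mu>0$ then $e(t)/\mu\in\mathcal{E}(t)$, so $\Phi(t)e(t)/\mu\in\lambda(t)\,\mathcal{E}(t)$ and hence $V(\Phi(t)e(t))\le \lambda(t)\,\mu\le \bar\lambda\,V(e(t))$, with $\bar\lambda=\max_t\lambda(t)<1$ as in~\eqref{lambdabar} (the case $\mu=0$ being trivial). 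Iterating over the horizon then gives the exponential decay $V(e(t))\le \bar\lambda^{\,t}V(e(0))$.

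For the disturbance case (ii), I would write the full recursion as $e(t+1)=\Phi(t)e(t)+d(t)$, where the residual $d(t)=[\Delta A\ \Delta B][\bar x(t)^\top\ \bar u(t)^\top]^\top+w(t)$ collects the open-loop model mismatch and the disturbance. Exploiting subadditivity of $V$, I would split $V(e(t+1))\le V(\Phi(t)e(t))+V(d(t))\le \bar\lambda\,V(e(t))+V(d(t))$. The residual is uniformly bounded: its disturbance contribution is controlled through the zonotope terms $y^j$ in~\eqref{yj}, its mismatch contribution through the support-function quantities $l^j(t),z^j(t)$ in~\eqref{lj}--\eqref{zj}, and $(\bar x(t),\bar u(t))$ stay in the bounded tightened sets by recursive feasibility (Theorem~\ref{recfeas}); setting $c_{\bar w}:=\sup_t V(d(t))$ yields the claimed one-step bound. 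Unrolling the affine contraction gives $V(e(t))\le \bar\lambda^{\,t}V(e(0))+c_{\bar w}/(1-\bar\lambda)$, so the error converges exponentially to the positively invariant sublevel set $\{e:V(e)\le c_{\bar w}/(1-\bar\lambda)\}$.

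The main obstacle I anticipate is the correct handling of the time-varying denominator $h_e(t)$: because the tube itself shrinks at rate $\lambda(t)$, measuring $e(t+1)$ against $\mathcal{E}(t+1)$ cancels the contraction and yields only monotonicity, so the genuine factor $\bar\lambda$ must be extracted by comparing consecutive iterates against a common tube (equivalently, by freezing the denominator $h_e(t)$ in the decrease step) before invoking $\bar\lambda=\max_t\lambda(t)$ to obtain a single uniform rate over the switching sequence. A secondary difficulty is certifying that $c_{\bar w}$ is finite and horizon-uniform, which rests on the boundedness of the nominal trajectory and on the support-function characterizations of the uncertainty already derived in Theorem~\ref{thm-LPf}.
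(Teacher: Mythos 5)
Your proposal is correct and follows essentially the same route as the paper's proof: positive definiteness of the polytopic gauge $V$, the homogeneity/scaling argument combined with the $\lambda(t)$-contractivity from Theorem~\ref{thm-LPf} (with the denominator frozen at $h_e(t)$ in the decrease step, which is exactly how the paper implicitly handles the shrinking tube) for the disturbance-free exponential decay, and subadditivity of $V$ plus a bounded lumped-uncertainty constant $c_{\bar w}$ for the affine contraction in the disturbance case. Your additions, the sandwich bounds $\alpha_1\|e\|\le V(e)\le \alpha_2\|e\|$ and the explicit invariant sublevel set $\{e: V(e)\le c_{\bar w}/(1-\bar\lambda)\}$, go slightly beyond what the paper writes out but do not alter the argument.
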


\begin{proof}
We first verify that $V(e(t))$ defines a polytopic Lyapunov function on the
error tube $\mathcal{E}(t)$ in~\eqref{eset}, i.e., $V$ is positive definite
on $\mathcal{E}(t)$. Since $0 \in \mathcal{E}(t)$, we have $H_e^{j,:} 0 \le h_e^j(t)$ for all
$j = 1,\dots,q$, and $0 \in \operatorname{int}\mathcal{E}(t)$ implies
$h_e^j(t) > 0$ for all $j$. Clearly,
\[
V(0) = \max_{j} \frac{H_e^{j,:}\, 0}{h_e^j(t)} = 0.
\]

Let $e(t) \in \mathcal{E}(t)\setminus\{0\}$ be arbitrary and consider the ray $r(\bar\alpha) := \bar\alpha e(t),$ with $\bar\alpha \ge 0$. Since $0 \in \operatorname{int}\mathcal{E}(t)$ and $\mathcal{E}(t)$ is compact, there exists a finite maximal scalar $\bar\alpha^\star > 0$ such that $\bar\alpha^\star e(t) \in \partial\mathcal{E}(t)$, and $\bar\alpha e(t) \in \mathcal{E}(t)$ for all $\bar\alpha \in [0,\bar\alpha^\star]$. Since $e(t)\neq 0$ and $0$ is an interior point, we must have $\bar\alpha^\star \ge 1$. At the boundary point $\bar\alpha^\star e(t)$, at least one constraint is active, so there exists $j \in \{1,\dots,q\}$ with
$H_e^{j,:}(\bar\alpha^\star e(t)) = h_e^j(t)$. Hence,
\[
H_e^{j,:} e(t) = \frac{h_e^j(t)}{\bar\alpha^\star} > 0
\quad\Rightarrow\quad
\frac{H_e^{j,:} e(t)}{h_e^j(t)} = \frac{1}{\bar\alpha^\star} > 0.
\]
Therefore,
\[
V\bigl(e(t)\bigr)
= \max_{j} \frac{H_e^{j,:} e(t)}{h_e^j(t)}
\;\ge\; \frac{H_e^{j,:} e(t)}{h_e^j(t)}
= \frac{1}{\bar\alpha^\star}
> 0.
\]

Thus, $V\bigl(e(t)\bigr) \ge 0$ for all $e(t)\in\mathcal{E}(t)$, and $V\bigl(e(t)\bigr) = 0$ if and only if $e(t)=0$. Hence, $V\bigl(e(t)\bigr)$ is positive definite on $\mathcal{E}(t)$ and serves as a polytopic Lyapunov candidate. We now establish its decay under $\lambda(t)$–contractivity in two cases: (i) the disturbance–free case and (ii) the case with disturbances.

\emph{(i) Disturbance–free case.}
Since $e(t)\in\mathcal{E}(t)$, the definition of $\mathcal{E}(t)$ implies $H_e^{j,:} e(t) \le h_e^j(t)$ for all $j\in\{1,\dots,q\}$. As $h_e^j(t) > 0$ for all $j$, we obtain
\[
\frac{H_e^{j,:} e(t)}{h_e^j(t)} \le 1, \qquad j=1,\dots,q,
\]
hence $0 \le V\bigl(e(t)\bigr) \le 1.$ Thus, for any $e(t)\in\mathcal{E}(t)$ we can write $V\bigl(e(t)\bigr)= c$ for some $c\in[0,1]$. Moreover, by the definition of $V$, the condition
\[
\frac{H_e^{j,:} e(t)}{h_e^j(t)} \le c \quad \forall j
\;\Longleftrightarrow\;
H_e e(t) \le c\, h_e(t),
\]
implies 
\begin{equation}\label{Ec}
e(t) \in c\,\mathcal{E}(t) =: \mathcal{E}_c(t)
= \mathcal{P}\bigl(H_e, c\,h_e(t)\bigr).
\end{equation}
%$e(t) \in c\,\mathcal{E}(t) = \mathcal E_c(t)$, i.e., $e(t) \in \mathcal{P}\bigl(H_e, c\,h_e(t)\bigr)$.

Solving the optimization problem~\eqref{LPf} over the finite horizon
$\mathcal N$ yields a sequence of tube–gain decisions $\mathcal{V}_K$
with associated contraction factors $\Lambda$. According to
Theorem~\ref{thm-LPf}, in the disturbance-free case (i.e., $w(t)\equiv 0$),
the error dynamics $e(t+1) = A^{K(t)} e(t)$, with
$A^{K(t)} := A + BU_0 V_K(t)$ and $V_K(t)\in\mathcal{V}_K$, satisfy
$A^{K(t)} \mathcal{E}(t) \subseteq \lambda(t)\,\mathcal{E}(t)$ for all $t=0,\dots,\mathcal N-1$, where
$\lambda(t) \in (0,1)$. This implies $e(t+1) \in \lambda(t)\,\mathcal{E}(t)= \mathcal{E}(t+1)$. 

Using $e(t) \in \mathcal{E}_c(t)$ from~\eqref{Ec} with $c := V\bigl(e(t)\bigr)\in[0,1]$, the same $\lambda(t)$–contractivity property from Theorem~\ref{thm-LPf}
applied to $\mathcal{E}_c(t)$ yields
$e(t+1) \in \lambda(t)\,\mathcal{E}_c(t)$. Hence
\[
H_e^{j,:} e(t+1) \le \lambda(t)\,c\,h_e^j(t)
\quad\Rightarrow\quad
\frac{H_e^{j,:} e(t+1)}{h_e^j(t)} \le \lambda(t)\,c,
\]
for $j=1,\dots,q.$ Taking the maximum over $j$ gives
\[
\max_j \frac{H_e^{j,:} e(t+1)}{h_e^j(t)}
\;\le\; \lambda(t)\,c,
\]
and thus, $V\bigl(e(t+1)\bigr) \le \lambda(t)\,V\bigl(e(t)\bigr)\le \bar\lambda\,V\bigl(e(t)\bigr),$ with $\bar\lambda$, defined in \eqref{lambdabar}. By induction this yields
\begin{align*}
V\bigl(e(t)\bigr)
&\;\le\;
\Biggl(\prod_{k=0}^{t-1} \lambda(k)\Biggr)
V\bigl(e(0)\bigr)\;\le\;
\bar\lambda^{\,t}\,V\bigl(e(0)\bigr).
\end{align*}

Since $\bar\lambda \in (0,1)$, $V(e(t))$ decays to zero exponentially fast, which completes the proof of case~(i).

\emph{(ii) Disturbance case.} Consider the uncertain error dynamics~\eqref{error-zonotope}
and define the lumped uncertainty $\bar w(t)$, which collects all disturbances and model
uncertainties and is confined to the zonotope $\langle G_{\bar w}, c_{\bar w} \rangle$.
Using positive homogeneity and subadditivity of $V$, one has
\[
V\bigl(e(t+1)\bigr)
\le V\bigl(A^{K(t)} e(t)\bigr) + V\bigl(\bar w(t)\bigr).
\]

Define
\[
c_{\bar w} := \max_{\|\zeta_{\bar w}\|\le 1} V\bigl( c_{\bar w} + G_{\bar w} \zeta_{\bar w}\bigr),
\]
which is finite since $\bar w$ is bounded. Hence
\[
V\bigl(e(t+1)\bigr) \le \bar\lambda\,V\bigl(e(t)\bigr) + c_{\bar w},
\]
which implies that $V\bigl(e(t)\bigr)$ converges exponentially to a bounded, positively invariant neighborhood of the origin. This completes the proof of case~(ii).

\end{proof}

\section{Simulation Results}

To validate the proposed elastic TMPC approach in practical scenarios, we consider the kinematic model of the Mecanum-drive ROSbot~XL platform (shown in Fig.~\ref{a}), where $x = [\,x_r\;\;y_r\;\;\varphi_r\,]^\top$ represents the planar position and yaw, and $u = [\,\omega^1\;\omega^2\;\omega^3\;\omega^4\,]^\top$ denotes the wheel angular rates. The body–velocity relation is
\begin{equation*}
\begin{bmatrix}
\dot x_r\\\dot y_r\\\dot\varphi_r
\end{bmatrix}
=
\frac{R_w}{4\ell_{ab}}
\begin{bmatrix}
\ell_{ab} &  \ell_{ab} &  \ell_{ab} &  \ell_{ab}\\
\ell_{ab} & -\ell_{ab} &  \ell_{ab} & -\ell_{ab}\\
1         & -1         & -1         &  1
\end{bmatrix}
\!
\begin{bmatrix}
\omega^1\\ \omega^2\\ \omega^3\\ \omega^4
\end{bmatrix},
\label{eq:rosbot-ct}
\end{equation*}
with wheel radius $R_w = 50\,\mathrm{mm}$ and $\ell_{ab} = \ell_a+\ell_b$, where $\ell_a = 134.84\,\mathrm{mm}$ and $\ell_b = 85\,\mathrm{mm}$ denote the longitudinal and lateral half-spacings shown in Fig.~\ref{b} \cite{rosbot}. For sampling time $t_s$, the discrete-time kinematic model is obtained as 
\begin{equation*} 
A^\star=\mathbf{I}_3,\quad
B^\star=\frac{R_w\,t_s}{4\ell_{ab}}
\begin{bmatrix}
\ell_{ab} &  \ell_{ab} &  \ell_{ab} &  \ell_{ab} \\
\ell_{ab} & -\ell_{ab} &  \ell_{ab} & -\ell_{ab} \\
1         & -1         & -1         &  1
\end{bmatrix}.
\label{eq:rosbot-dt}
\end{equation*}

The additive disturbance $w(t)$ lies in a zonotope with $c_h = 0$ and
\[
G_h = \alpha
\begin{bmatrix}
    0.05 & 0.08 \\
    0.01 & 0.06\\
    0.03 & -0.01
\end{bmatrix},
\]
where $\alpha$ is a noise scaling factor. The admissible state set is a box
constraint $\mathcal{X} = \big\{ x \in \mathbb{R}^3 : |x_r| \le 4,\; |y_r| \le 4,\;
|\varphi_r| \le \pi/2 \big\},$ and the input set $\mathcal{U}$ restricts the wheel speeds to given bounds $|\omega^i| \le 100, \,i = 1,\dots,4.$ Prior knowledge on the dynamics is encoded as a constrained matrix zonotope $\langle G_\theta, C_\theta, \bar A_w, \bar B_w\rangle$, where $C_\theta$ and
$G_\theta$ are obtained from 17 offline data points generated by applying a
stabilizing input under a different disturbance zonotope
\[
\mathcal{M}_{wp} = \big\langle
\begin{bmatrix}
    0.03 & -0.01 \\
    -0.04 & 0.05 \\
    -0.01 & 0
\end{bmatrix},
\begin{bmatrix}
    1 \\ -1 \\ 0
\end{bmatrix},
\bar A_w,\bar B_w
\big\rangle,
\]
with $\bar A_w$ and $\bar B_w$ computed as in~\eqref{Aw}. In all simulations, the prediction horizon $N$ is set to 10, $Q = 20\,\mathbf{I}_3$, $R = 0.1\,\mathbf{I}_4$, and $\sigma = 1$.

\begin{figure}[t]
  \centering
  \begin{subfigure}{0.49\linewidth}
    \includegraphics[width=\linewidth, keepaspectratio]{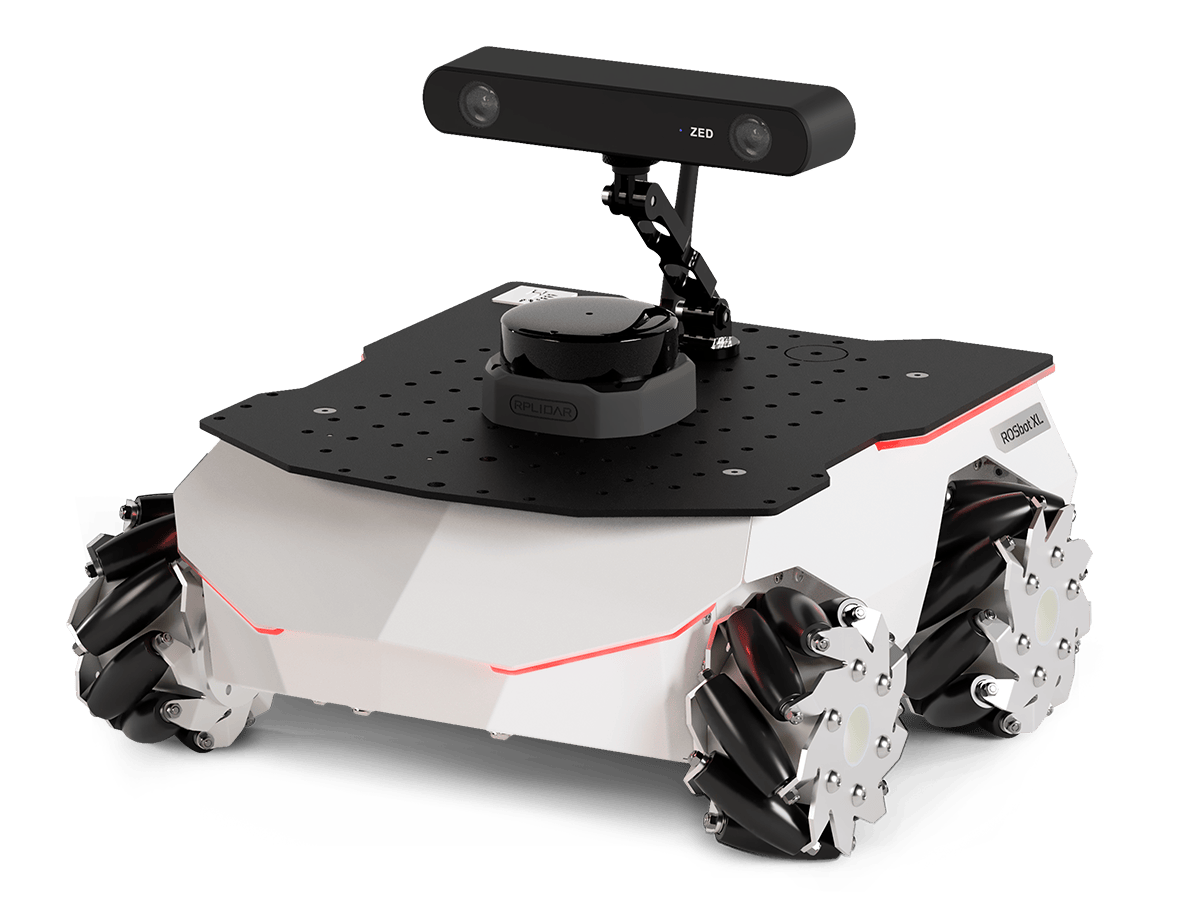}
    \caption{}
    \label{a}
  \end{subfigure}%
  \hspace{0.01\linewidth}% <- minimal gap
  \begin{subfigure}{0.49\linewidth}
    \includegraphics[width=\linewidth, height=1.5in]{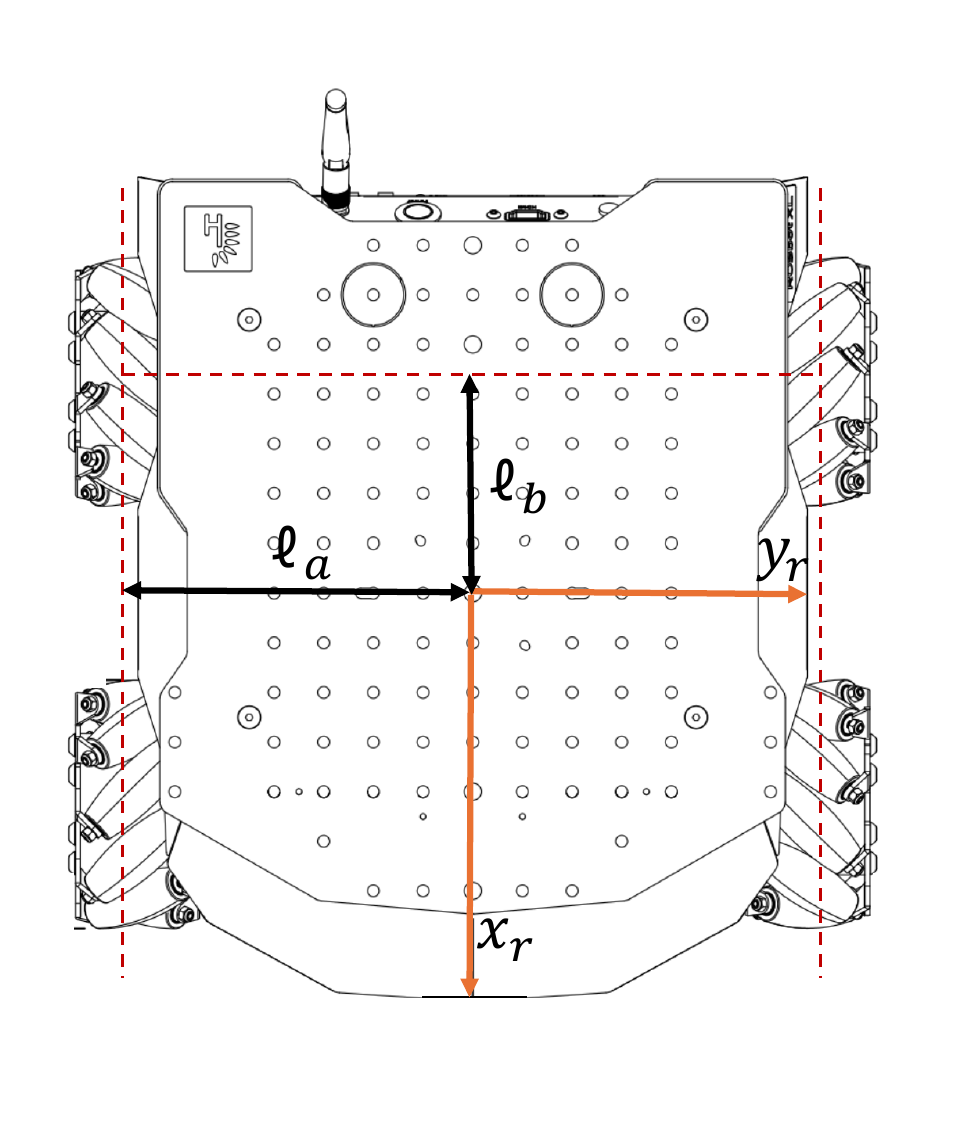}
    \caption{}
    \label{b}
  \end{subfigure}
  \caption{\small{ROSbot XL equipped with Mecanum wheels.}}
  \label{rosbot}
\end{figure}
To illustrate how adaptive gain updates and tube shrinking reduce conservatism and improve feasibility in the TMPC problems, we compare feasibility rates across different noise levels and data lengths. Fig.~\ref{alpha} shows the percentage of feasibility versus the noise scaling factor $\alpha$, while Fig.~\ref{TT} demonstrates feasibility as a function of the data length $T$. In both plots, we benchmark the proposed elastic TMPC against the data-driven tube-based zonotopic predictive control (TZPC) scheme of~\cite{un2}, which is based on the error dynamics~\eqref{error-dyn2} with a fixed feedback gain $K$ and a fixed robust invariant tube. For our method, we consider two variants: (i) a data-only refinement, where the disturbance tube is updated solely from data-consistency constraints defined in \eqref{Aw}; and (ii) a data--prior refinement as in~\eqref{dw}, where the disturbance model set is further tightened by intersecting the data-consistent set with the physical prior.

\begin{figure}[t]
  \centering
  \begin{subfigure}[t]{0.85\linewidth}
    \centering
    \includegraphics[width=\linewidth,keepaspectratio]{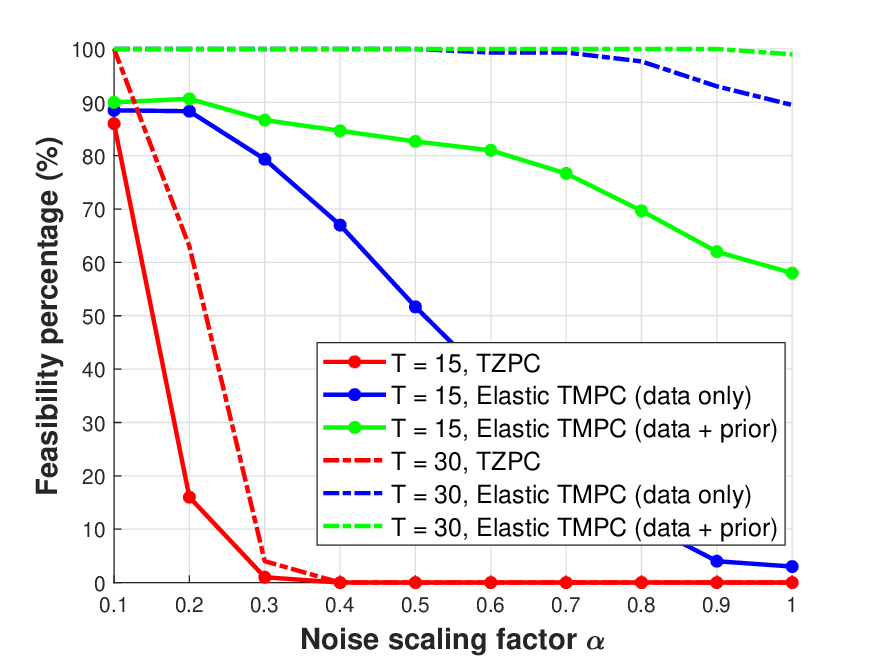}
    \subcaption{}
    \label{alpha}
  \end{subfigure}
  \begin{subfigure}[t]{0.85\linewidth}
    \centering
    \includegraphics[width=\linewidth,keepaspectratio]{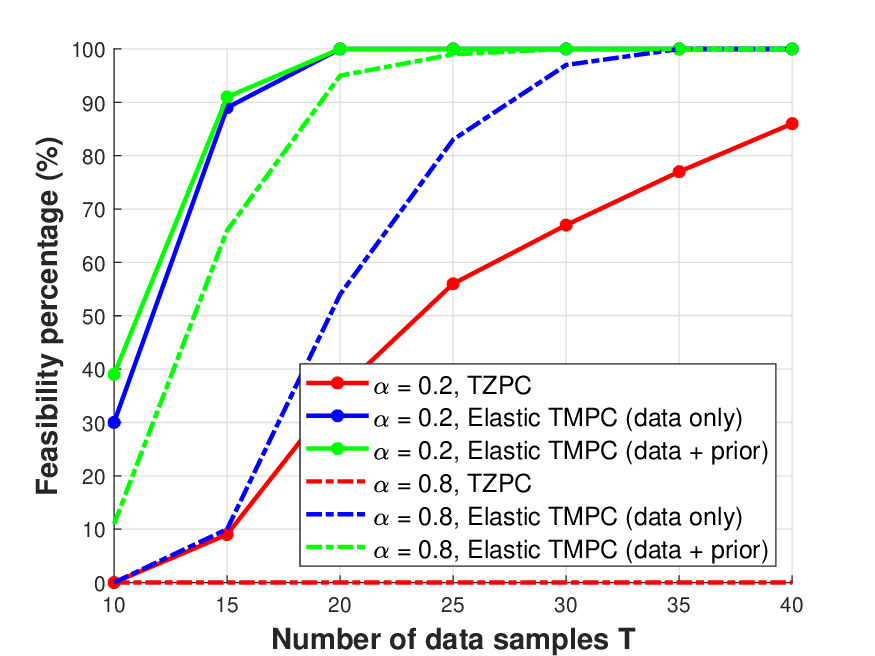}
    \subcaption{}
    \label{TT}
  \end{subfigure}
  \caption{\small{Feasibility percentage versus (a) noise scaling factor $\alpha$ and (b) data length $T$, each averaged over 100 Monte Carlo simulations.}}
  \label{alpha-TT}
\end{figure}

In Fig.~\ref{alpha}, we plot the feasibility percentage as a function of the noise scaling factor $\alpha$ for two data lengths, $T=15$ and $T=30$. For $T=15$, both variants of the proposed adaptive TMPC substantially outperform TZPC: the data-only refinement (blue) decreases from about $90\%$ to values close to $0\%$ as $\alpha$ grows from $0.1$ to $1$, while the data--prior refinement (green) remains markedly more robust, dropping from about $90\%$ to roughly $60\%$. Over the same range, TZPC (red) starts slightly under $90\%$ at $\alpha=0.1$ but its feasibility collapses to nearly zero already around $\alpha= 0.4$. For $T=30$, the advantage of the proposed approach becomes even more pronounced: the data--prior refinement (green) maintains $100\%$ feasibility across all noise levels, and the data-only refinement (blue) stays above about $90\%$ even at $\alpha=1$, whereas TZPC again rapidly loses feasibility as the noise level increases. This weak noise tolerance of TZPC is a direct consequence of its conservative design, which employs a fixed feedback gain and a fixed robust invariant tube derived from an error description based solely on the open-loop model set defined in \eqref{error-dyn2}. As a result, the tube size scales with the admissible sets $\mathcal{X}$ and $\mathcal{U}$, leading to overly large tubes, increased conservatism, and reduced noise tolerance.

In Fig.~\ref{TT}, we plot the feasibility percentage versus the number of data samples $T$ for two noise scaling factors, $\alpha=0.2$ and $\alpha=0.8$. For $\alpha=0.2$, the feasibility percentage of all three methods improve with more data but at very different rates: TZPC (red) increases only gradually, from about $0\%$ at $T=10$ to roughly $85\%$ at $T=40$, whereas the elastic TMPC schemes increase much more rapidly—the data-only variant (blue) rises from about $30\%$ at $T=10$ to $100\%$ at $T=20$, and the data--prior variant (green) from about $40\%$ to $100\%$ over the same range. For the higher noise level $\alpha=0.8$, the contrast is sharper: TZPC remains at $0\%$ feasibility for all $T\in[10,40]$, indicating that additional data alone cannot overcome its fixed conservative tube. In contrast, the proposed elastic TMPC still exploits the data effectively: the data-only refinement climbs from $0\%$ at $T=10$ to $100\%$ by $T=35$, while the data--prior refinement starts around $10\%$ at $T=10$ and attains $100\%$ feasibility around $T=25$, already exceeding $90\%$ at $T=20$. Consequently, even at the higher noise level, the proposed approach can achieve $100\%$ feasibility with only a minimal amount of online noisy data. This behavior is consistent with the fact that, as $T$ increases, the realized disturbance set becomes tighter. The elastic TMPC schemes exploit these refined sets—combining data-consistency constraints and priors—and use them to update both the tube and the feedback gain online, which in turn yields higher feasibility. By contrast, TZPC keeps a conservative fixed tube and gain tied to $\mathcal X$ and $\mathcal U$ without refining the realized disturbance set, so for $\alpha = 0.8$ its feasibility does not improve over the explored data range and would require substantially more data to achieve comparable performance.

We next fix the data length at $T=20$ and the noise--scaling parameter at $\alpha=0.7$, and consider phase portraits with target state $[-3.5,,-3.5,,-\pi/4]^\top$. Fig.~\ref{TMPCplot} shows a representative closed-loop realization in which the system starts from the boundary of the admissible set at $x(0)=[4,\,4,\,\pi/2]^\top$ and is driven to the target under the proposed TMPC approach. To quantify the benefit of fusing physical priors with a small batch of online noisy data, we compare two refinement strategies for the realized disturbance set used in the tube--gain update. In Fig.~\ref{dd}, only the data--consistency constraint~\eqref{dcons} is enforced. This produces the data--refined set $\mathcal{M}_w$ with constraint matrices defined in \eqref{Aw}, which is then used in the tube--gain update optimization~\eqref{LPf}. In Fig.~\ref{pp}, we additionally impose prior consistency by intersecting the data--consistent set with the physical prior, yielding the tighter set $\mathcal{M}_{dw}$ in~\eqref{dw}, again employed in~\eqref{LPf}. As shown in Fig.~\ref{TMPCplot}, both cases lead to tube contraction over time as more data are incorporated. Nevertheless, enforcing the physical prior eliminates disturbance models incompatible with the priors, yielding a much smaller initial uncertainty set. This reduction leads to a tighter error tube, which in turn reduces conservatism in the nominal planning while preserving safety.

%\begin{figure}[t]
%  \centering
%  \begin{subfigure}{0.49\linewidth}
%    \includegraphics[width=\linewidth, height=1.6in]{feasVsT1.eps}
%    \caption{}
%    \label{TT}
%  \end{subfigure}%
%  \hspace{0.01\linewidth}% <- minimal gap
%  \begin{subfigure}{0.49\linewidth}
%    \includegraphics[width=\linewidth, height=1.6in]{feasVsalpha1.eps}
%    \caption{}
%    \label{alpha}
%  \end{subfigure}
%  \caption{\small{Feasibility percentage vs data number $T$ for $\alpha = 0.7$ (averaged over 100 Monte Carlo simulations) and (b) Feasibility percentage vs noise scaling factor $\alpha$ for $T = 20$ (averaged over 100 Monte Carlo simulations)}}
%  \label{TT-alpha}
%\end{figure}

%\begin{figure}[h]
%    \centering
%    \includegraphics[width=0.48\textwidth, keepaspectratio]{adm2.eps} 
%    \caption{$T = 20$}
%\end{figure}
%\begin{figure}[h]
%    \centering
%    \includegraphics[width=0.48\textwidth, keepaspectratio]{adm3.eps} 
%    \caption{$T = 20$}
%\end{figure}
\begin{figure}[ht]
  \centering
  \begin{subfigure}[t]{0.85\linewidth}
    \centering
    \includegraphics[width=\linewidth,keepaspectratio]{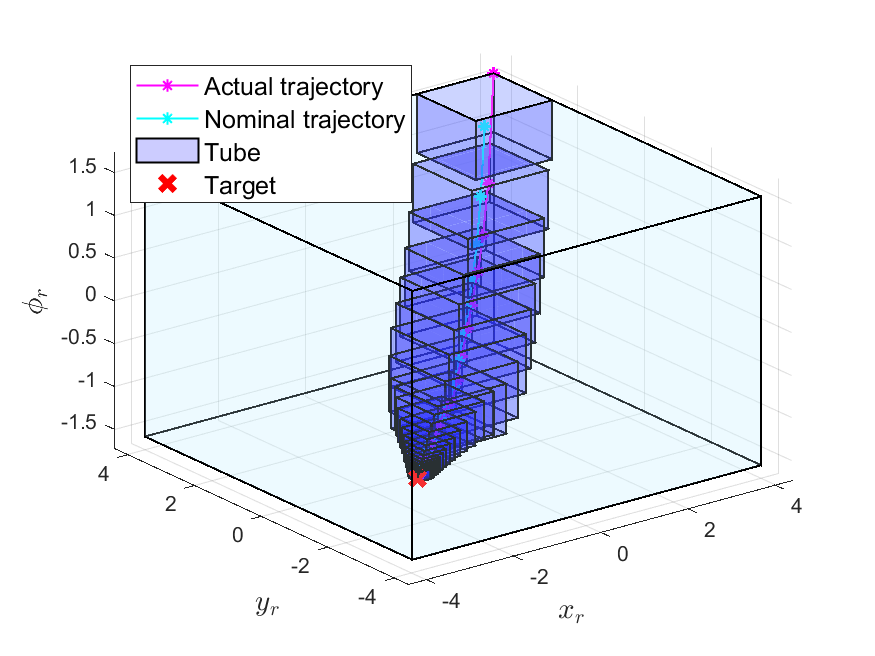}
    \subcaption{}
    \label{dd}
  \end{subfigure}
  \begin{subfigure}[t]{0.85\linewidth}
    \centering
    \includegraphics[width=\linewidth,keepaspectratio]{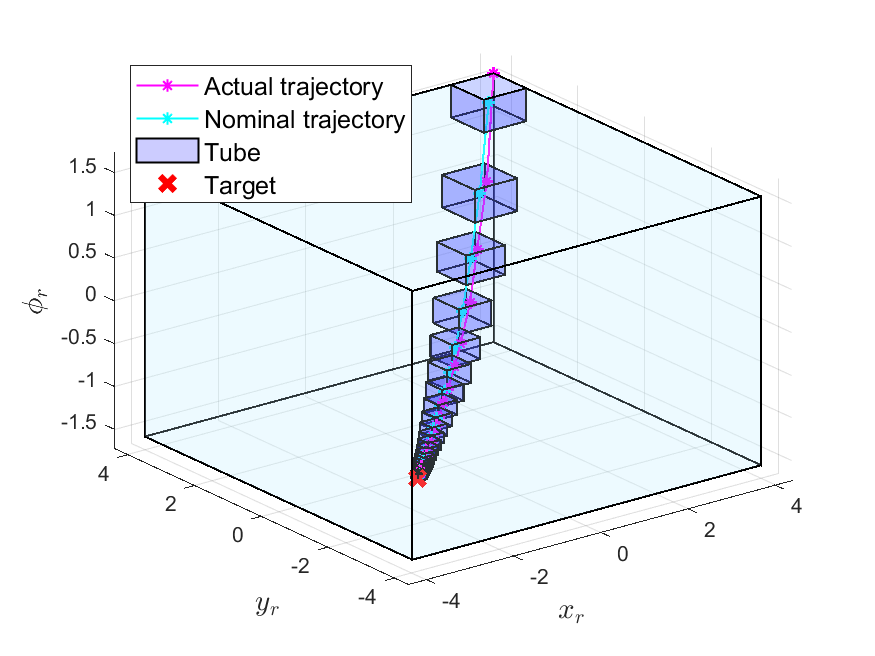}
    \subcaption{}
    \label{pp}
  \end{subfigure}
  \caption{\small{Phase portrait of the mobile robot's trajectory in the state space, starting from $[4,\,4,\,\pi/2]^\top$ and steering to the target $[-3.5,\,-3.5,\,-\pi/4]^\top$. (a) Data-only refinement. (b) Data--prior refinement.}}
  \label{TMPCplot}
\end{figure}
\section{Conclusion}
A novel elastic tube-based MPC framework is presented for unknown discrete-time linear systems with bounded disturbances that fuses data with prior physical knowledge. A zonotopic disturbance set is initialized and refined into a constrained matrix zonotope, yielding constrained matrix-zonotope model sets for both open- and closed-loop dynamics. By separating open-loop mismatch from closed-loop effects and adaptively co-designing the tube and ancillary feedback, the controller enforces contractive zonotopic tubes that ensure robust positive invariance, enlarge feasibility margins, and improve disturbance tolerance. Recursive feasibility is established and exponential stability is certified via a polyhedral Lyapunov function. Simulations demonstrate higher feasibility and robust constraint satisfaction using only a small batch of online data; future work will extend the framework to nonlinear stochastic systems.

\bibliographystyle{unsrt} 
\bibliography{ref.bib}

%\begin{figure}[ht]
%  \centering
%  \begin{subfigure}[t]{0.95\linewidth}
%    \centering
%    \includegraphics[width=\linewidth,keepaspectratio]{T20d.eps}
%    \subcaption{With data constraint}
%  \end{subfigure}
%  \begin{subfigure}[t]{0.95\linewidth}
%    \centering
%    \includegraphics[width=\linewidth,keepaspectratio]{T20p.eps}
%    \subcaption{With data-prior constraint}
%  \end{subfigure}
%  \caption{State trajectories, $T=20$, $\alpha = 0.7$.}
%  \label{fig:mpcR2}
%\end{figure}

%\begin{figure}[ht]
%  \centering
%  \begin{subfigure}[t]{0.95\linewidth}
%    \centering
%    \includegraphics[width=\linewidth,keepaspectratio]{T20d1.eps}
%    \subcaption{With data constraint}
%  \end{subfigure}
%  \begin{subfigure}[t]{0.95\linewidth}
%    \centering
%    \includegraphics[width=\linewidth,keepaspectratio]{T20p1.eps}
%    \subcaption{With data-prior constraint}
%  \end{subfigure}
%  \caption{State trajectories, $T=20$, $\alpha = 0.7$.}
%  \label{fig:mpcR2}
%\end{figure}

%\begin{figure}[ht]
%  \centering
%  \begin{subfigure}[t]{0.95\linewidth}
%    \centering
%    \includegraphics[width=\linewidth,keepaspectratio]{T20d1Tr.eps}
%    \subcaption{With data constraint}
%  \end{subfigure}
%  \begin{subfigure}[t]{0.95\linewidth}
%    \centering
%    \includegraphics[width=\linewidth,keepaspectratio]{T20p1Tr.eps}
%    \subcaption{With data-prior constraint}
%  \end{subfigure}
%  \caption{ $T=20$, $\alpha = 0.7$, Target=$[-3.5,-3.5,-\pi/4]^\top$.}
%  \label{fig:mpcR2}
%\end{figure}
%\bibliographystyle{IEEEtran}
%\balance
%\bibliography{ref.bib}
%\printbibliography
%\vspace{12pt}
\end{document}